\DeclareMathOperator{\wt}{wt}
\DeclareMathOperator{\rank}{rank}
\DeclareMathOperator{\Circ}{Circ}
\DeclareMathOperator{\RevCirc}{RevCirc}
\numberwithin{figure}{section}
\newtheorem{theorem}{Theorem}[section]
\newtheorem{lemma}[theorem]{Lemma}
\newtheorem{corollary}[theorem]{Corollary}
\newtheorem{proposition}[theorem]{Proposition}
\newtheorem{example}[theorem]{Example}
\newtheorem{remark}[theorem]{Remark}
\newtheorem{definition}[theorem]{Definition}
\begin{document}

\title{Quantum Codes from Group Codes}

\author{Tushar Bag$^{1,2}$,  Daniel Panario$^3$}
\date{
\small{
1. Department of Mathematics, SRM University-AP, Amaravati 522240, Andhra Pradesh, India \\
2. Inria, ENS de Lyon, Lyon 1, LIP, 69342, Lyon cedex 07, France\\
3. School of Mathematics and Statistics, Carleton University, Ottawa, Ontario K1S 5B6, Canada.
}
\today}

\begingroup
\renewcommand\thefootnote{}
\footnotetext{
Email: tusharbag2011@gmail.com (T. Bag) [corresponding author],
daniel@math.carleton.ca (D. Panario)
}
\endgroup

\maketitle

\begin{abstract}
We study linear codes and quantum error-correcting codes (QECCs) constructed from group rings over finite fields. Using the algebraic structure of group rings, we give a single framework for codes over several group structures, including cyclic, dihedral, direct-product, and semidirect-product groups. We establish necessary and sufficient conditions for these group codes to be self-orthogonal under the Euclidean, Hermitian, and symplectic inner products. We show that non-isomorphic groups of the same order can generate inequivalent codes with distinct parameters, and we support this with explicit computational comparisons. Using these structural results, we give explicit constructions of quantum codes and provide new examples that match or improve upon the best known parameters.
In particular, we describe explicit block-matrix forms of the generating matrices for dihedral and direct-product groups, and we use a Kronecker-product construction to obtain an infinite family of self-orthogonal group codes together with the corresponding QECCs.
\end{abstract}

\noindent\text{\bf Keywords:} \small{Group rings, quantum error-correcting codes (QECCs).}\\
\text{\bf Mathematics Subject Classification (2010):} \small{94B05, 94B15, 94B60.}\\

\normalsize

\section{Introduction}

Group codes were introduced by Berman \cite{B67a, B67b}, with an initial focus on abelian and semisimple abelian codes. Significant theoretical advances were established in subsequent works \cite{BDS09, HHH18, HH03, HH09, M69, M70, W06}. In particular, Hurley \cite{H06} constructed an explicit isomorphism between a group ring and a matrix ring, which simplifies the encoding and analysis of group codes. This matrix representation is the main algebraic tool used in our study. More recently, group codes employing this matrix approach have been explored by Dougherty et al.\ \cite{D23, DGTT18, DKSU23} and by Yu and Zhu \cite{YZ23, YZ24}. Borello and Willems \cite{BW20} proved that group codes over finite fields are asymptotically good in every characteristic. Borello's lecture notes \cite{Bor21} provide an overview of the algebraic structure of group codes and their connections with classical linear codes. Related work by Borello and Jamous \cite{BJ18} examines the automorphism groups of binary linear codes, while group codes have also proven useful for constructing good linear codes \cite{H24, JLLX10}.

\vskip 3pt
Quantum error-correcting codes (QECCs) protect quantum information against decoherence and quantum noise. Quantum computers offer significant computational potential, but they remain susceptible to errors caused by environmental interactions, and QECCs are a central tool for protecting quantum information in both computing and communication. The foundational concepts of QECCs were introduced by Shor, Steane, and Calderbank \cite{CS96, Sh95, St96}, and the Calderbank--Shor--Steane (CSS) framework \cite{CRSS98} continues to play a fundamental role in quantum coding theory. Techniques for constructing non-binary quantum codes from classical self-orthogonal codes have been developed in subsequent works \cite{AKS07, AK01, KKKS06}.

\vskip 3pt
Our motivation for studying linear codes through group rings, and for using them to construct quantum codes, is the following. For a fixed length, group rings provide several different generating matrices for a code. For example, to construct a code of length $12$ over $\mathbb{F}_q$, one can use the group ring $\mathbb{F}_q[G]$ for any group $G$ of order $12$. Up to isomorphism there are five such groups: the cyclic group $C_{12}$, the dihedral group $D_6$, the direct product $C_2 \times C_6$, the alternating group $A_4$, and the dicyclic group $C_3 \rtimes C_4$. Here $C_n$ is the cyclic group of order $n$, $D_n$ is the dihedral group of order $2n$, $A_4$ is the alternating group of order $12$, $\times$ denotes the direct product of groups, and $\rtimes$ denotes the semidirect product. By exploring these distinct, non-isomorphic structures we gain flexibility in code design and can obtain larger minimum distances than those available from cyclic codes of the same length, as confirmed by the comparisons in Section~\ref{sec:comparison}.

\vskip 3pt
A further advantage of the group ring formalism is that it extends naturally to non-abelian groups, such as dihedral groups, direct products of cyclic and dihedral groups, and semidirect products, all within one matrix description and without recourse to non-commutative quotient rings. This uniformity is what lets the duality criteria and the quantum constructions of Section~\ref{sec:qecc} be stated once and applied across all of these families.

\vskip 3pt
This paper is organized as follows. Section~\ref{sec:prelim} establishes the preliminaries on group rings and their correspondence with linear codes via Hurley's matrix homomorphism. In Section~\ref{sec:gencodes}, we describe the block-matrix generation of group codes derived from various group structures, including $D_n$, $C_l \times C_m$, and $C_l \times D_m$. Section~\ref{sec:ordering} studies the effect of the ordering of group elements on the resulting generating matrices. In Section~\ref{sec:distinct}, we prove that non-isomorphic groups of the same order can generate inequivalent codes; this is supported by computational comparisons in Section~\ref{sec:comparison}. Section~\ref{sec:qecc} establishes necessary and sufficient conditions for group codes to be self-orthogonal under the Euclidean, Hermitian, and symplectic inner products, and gives explicit constructions of quantum codes that match or improve upon known parameters. Section~\ref{sec:conclusion} concludes the paper.
\section{Preliminaries}\label{sec:prelim}
Let $G$ be a finite group and $\mathbb{F}_q$ the finite field with $q = p^r$ elements, where $p$ is prime. The \textit{group ring} $\mathbb{F}_q[G]$ consists of all formal $\mathbb{F}_q$-linear combinations of group elements,
\[
a = \sum_{g \in G} \alpha_g\, g, \quad \alpha_g \in \mathbb{F}_q,
\]
with only finitely many nonzero coefficients. Writing $G = \{g_1, \ldots, g_n\}$ for a fixed ordering, addition in $\mathbb{F}_q[G]$ is componentwise, and multiplication extends bilinearly from the group law:
\[
\Bigl(\sum_{g \in G} \alpha_g\, g\Bigr)\Bigl(\sum_{h \in G} \beta_h\, h\Bigr)
= \sum_{g' \in G} \gamma_{g'}\, g', \qquad \gamma_{g'} = \sum_{gh = g'} \alpha_g \beta_h.
\]
These operations make $\mathbb{F}_q[G]$ a unital ring---an $n$-dimensional $\mathbb{F}_q$-algebra---with $|\mathbb{F}_q[G]| = q^n$ elements and unity $1_G$ (the identity of $G$ viewed as a ring element). Scalar multiplication is $\lambda \cdot a := \sum_{g} (\lambda\alpha_g)g$ for $\lambda \in \mathbb{F}_q$.

\medskip
We identify $\mathbb{F}_q[G]$ with $\mathbb{F}_q^n$ as $\mathbb{F}_q$-vector spaces via the bijection
\[
\phi \colon \mathbb{F}_q^n \to \mathbb{F}_q[G], \quad (\alpha_1, \ldots, \alpha_n) \mapsto \sum_{i=1}^n \alpha_i g_i.
\]
Under $\phi$, the standard Hamming weight on $\mathbb{F}_q^n$ corresponds to the \textit{weight} of a group ring element: for $a = \sum_{i=1}^n \alpha_i g_i$,
\[
\wt(a) = |\operatorname{supp}(a)|, \qquad \operatorname{supp}(a) := \{g_i \in G : \alpha_i \neq 0\},
\]
and the Hamming distance between $a, b \in \mathbb{F}_q[G]$ is $d(a,b) = \wt(a - b)$.

\medskip
We now describe the codes we study.

\begin{definition}\label{def:groupcode_prelim}
A linear code $C \subseteq \mathbb{F}_q^n$ is a group code if $\phi(C)$ is a left ideal of $\mathbb{F}_q[G]$; that is, $g \cdot a \in \phi(C)$ for every $a \in \phi(C)$ and $g \in G$.
\end{definition}

\noindent The connection to classical cyclic codes is immediate: when $G = C_n$ is the cyclic group of order $n$, the natural isomorphism $\mathbb{F}_q[C_n] \cong \mathbb{F}_q[x]/\langle x^n - 1\rangle$ recovers the usual description of cyclic codes as ideals in the quotient polynomial ring. Group codes over non-cyclic groups thus directly generalize this framework.

\medskip
To study these codes computationally, we use the matrix embedding introduced by Hurley~\cite{H06}.

\begin{theorem}[\cite{H06}]\label{thm:hurley}
Let $G = \{g_1, \ldots, g_n\}$ be a finite group and $a = \sum_{k=1}^n \alpha_{g_k} g_k \in \mathbb{F}_q[G]$. Define $\sigma(a) \in M_n(\mathbb{F}_q)$ by
\[
\bigl(\sigma(a)\bigr)_{ij} = \alpha_{g_i^{-1} g_j}.
\]
Then $\sigma \colon \mathbb{F}_q[G] \to M_n(\mathbb{F}_q)$ is an injective $\mathbb{F}_q$-algebra homomorphism.
\end{theorem}

Finally, we transport the standard inner product to the group ring setting. For $a = \sum_i \alpha_i g_i$ and $b = \sum_i \beta_i g_i$ in $\mathbb{F}_q[G]$, their \textit{Euclidean inner product} is
\[
\langle a, b \rangle_e = \sum_{i=1}^n \alpha_i \beta_i .
\]
The \textit{Euclidean dual} of a code $C \subseteq \mathbb{F}_q[G]$ is
\[
C^{\perp_e} = \{a \in \mathbb{F}_q[G] : \langle a, b\rangle_e = 0 \text{ for all } b \in C\}.
\]
We say $C$ is \textit{self-orthogonal} if $C \subseteq C^{\perp_e}$, and \textit{dual-containing} if $C^{\perp_e} \subseteq C$.

\section{Group Codes and Their Generating Matrices}\label{sec:gencodes}
A generator matrix of a linear code may be presented in two ways. In standard (row-reduced) form its rows are linearly independent and their number equals the dimension of the code. For codes carrying additional algebraic symmetry it is often more convenient to work with a square spanning matrix whose row space is the code; such a matrix may have linearly dependent rows. Throughout this paper we call this square spanning matrix the \emph{generating matrix} of the code, and a standard generator matrix is recovered by selecting any maximal linearly independent subset of its rows. For group codes the natural generating matrix is the one attached to an element of the underlying group ring via Theorem~\ref{thm:hurley}.

We use two square arrays throughout our constructions: the \emph{circulant} and \emph{reverse circulant} matrices,
\[
\Circ(a_1,a_2,\dots,a_n) = \begin{pmatrix}
a_1  & a_2  & \cdots & a_n  \\
a_n  & a_1  & \cdots & a_{n-1}  \\
\vdots & \vdots & \ddots & \vdots \\
a_2  & a_3  & \cdots & a_{1}  \\
\end{pmatrix},~~
\RevCirc(a_1,a_2,\dots,a_n) = \begin{pmatrix}
a_1  & a_2  & \cdots & a_n  \\
a_2  & a_3  & \cdots & a_{1}  \\
\vdots & \vdots & \ddots & \vdots \\
a_n  & a_1  & \cdots & a_{n-1}  \\
\end{pmatrix}.
\]
Thus each row of a circulant is the right cyclic shift of the previous one, while each row of a reverse circulant is the left cyclic shift.

We recall two elementary properties of circulant matrices.

\begin{lemma}\label{lem:circ-algebra}
The set of $n\times n$ circulant matrices over $\mathbb{F}_q$ is a commutative $\mathbb{F}_q$-algebra, and the map
\[
\Circ(a_1,a_2,\dots,a_n)\longmapsto a_1+a_2 t+\cdots+a_n t^{\,n-1}
\]
is an $\mathbb{F}_q$-algebra isomorphism onto $R_n:=\mathbb{F}_q[t]/(t^n-1)$. Consequently, writing $a(t)=a_1+a_2t+\cdots+a_nt^{\,n-1}$,
\[
\Circ(a_1,\dots,a_n)\ \text{is nonsingular if and only if}  \gcd\!\big(a(t),t^n-1\big)=1,
\]
and in general $\operatorname{rank}\Circ(a_1,\dots,a_n)=n-\deg\gcd\!\big(a(t),t^n-1\big)$.
\end{lemma}

\begin{proof}
The displayed map is precisely the matrix $\sigma$ of Theorem~\ref{thm:hurley} applied to the cyclic group $C_n=\langle t\rangle$, identified with $R_n$; hence it is an injective unital $\mathbb{F}_q$-algebra homomorphism, and commutativity is inherited from $R_n$. Since $R_n$ is a principal ideal ring, the ideal $R_n\,a(t)$ equals $\big(\gcd(a(t),t^n-1)\big)$, whose $\mathbb{F}_q$-dimension is $n-\deg\gcd(a(t),t^n-1)$; this dimension is the rank of $\Circ(a_1,\dots,a_n)$.
\end{proof}

\begin{lemma}\label{lem:revcirc}
Let $J$ denote the $n\times n$ exchange (anti-diagonal) matrix. Then
\[
\RevCirc(a_1,a_2,\dots,a_n)=\Circ(a_n,a_{n-1},\dots,a_1)\,J .
\]
In particular every reverse circulant matrix is symmetric, and
\[
\operatorname{rank}\RevCirc(a_1,\dots,a_n)=\operatorname{rank}\Circ(a_n,a_{n-1},\dots,a_1).
\]
\end{lemma}

\begin{proof}
The $(i,j)$-entry of $\RevCirc(a_1,\dots,a_n)$ is $a_{1+((i+j-2)\bmod n)}$, which depends only on $i+j$; hence the matrix is symmetric. A direct computation of the $(i,j)$-entry of $\Circ(a_n,\dots,a_1)J$ gives the same value, which proves the identity. As $J$ is a permutation matrix it is invertible, so multiplication by $J$ preserves rank.
\end{proof}

\begin{definition}\label{def:groupcode}
For $c\in\mathbb{F}_q[G]$, the \emph{group code} $\mathcal{C}(c)$ is the $\mathbb{F}_q$-row space of $\sigma(c)$; it is the image under $\sigma$ of the left ideal $\mathbb{F}_q[G]\,c$. Its dimension is $\dim_{\mathbb{F}_q}\mathcal{C}(c)=\operatorname{rank}\sigma(c)$.
\end{definition}

A useful consequence of Theorem~\ref{thm:hurley} is that $\sigma(c)$ is nonsingular if and only if $c$ is a unit of $\mathbb{F}_q[G]$, in which case $\mathcal{C}(c)=\mathbb{F}_q^{\,n}$; the codes of interest therefore arise from the zero divisors of the group ring.

\subsection{Self-Dual, Self-Orthogonal, and LCD Group Codes}\label{subsec:duality}
We equip $\mathbb{F}_q^{\,n}$ with the Euclidean inner product $\langle u,v\rangle=\sum_{i=1}^n u_iv_i$ and write $\mathcal{C}^\perp$ for the dual code. The group ring carries the \emph{canonical involution}
\[
c^*=\Big(\sum_{g\in G}\alpha_g\,g\Big)^{\!*}=\sum_{g\in G}\alpha_g\,g^{-1},
\]
an $\mathbb{F}_q$-linear anti-automorphism satisfying $(cd)^*=d^*c^*$ and $c^{**}=c$. The following lemma links the matrix transpose to this involution and underlies the duality results below.

\begin{lemma}\label{lem:involution}
For every $c\in\mathbb{F}_q[G]$ we have $\sigma(c^*)=\sigma(c)^{\mathsf T}$.
\end{lemma}

\begin{proof}
Write $c=\sum_{g\in G}\alpha_g g$, so that $c^*=\sum_{g\in G}\alpha_g g^{-1}$. The $(r,s)$-entry of $\sigma(c^*)$ is the coefficient of $g_r^{-1}g_s$ in $c^*$. Since the coefficient of $h$ in $c^*$ is $\alpha_{h^{-1}}$,
\[
(\sigma(c^*))_{r,s}=\alpha_{(g_r^{-1}g_s)^{-1}}=\alpha_{g_s^{-1}g_r}.
\]
By the definition of $\sigma(c)$,
\[
(\sigma(c))_{s,r}=\alpha_{g_s^{-1}g_r}.
\]
Hence $(\sigma(c^*))_{r,s}=(\sigma(c))_{s,r}=(\sigma(c)^{\mathsf T})_{r,s}$ for every pair $(r,s)$, so $\sigma(c^*)=\sigma(c)^{\mathsf T}$.
\end{proof}

\begin{proposition}\label{prop:dual-is-group}
For every $c\in\mathbb{F}_q[G]$,
\[
\mathcal{C}(c)^\perp=\{\,v\in\mathbb{F}_q[G]:\ v\,c^*=0\,\},
\]
which is a left ideal of $\mathbb{F}_q[G]$. In particular the Euclidean dual of a group code is again a group code, and $\dim\mathcal{C}(c)^\perp=n-\operatorname{rank}\sigma(c)$.
\end{proposition}

\begin{proof}
For $h\in G$ one has $\langle uh,v\rangle=\langle u,vh^{-1}\rangle$, and extending bilinearly gives $\langle ux,v\rangle=\langle u,vx^*\rangle$ for all $u,v,x\in\mathbb{F}_q[G]$. Since $\mathcal{C}(c)=\{uc:u\in\mathbb{F}_q[G]\}$, a vector $v$ lies in $\mathcal{C}(c)^\perp$ if and only if $\langle uc,v\rangle=\langle u,vc^*\rangle=0$ for all $u$, that is, if and only if $vc^*=0$ (the form is nondegenerate). This set is closed under left multiplication, hence a left ideal. The dimension is the standard duality identity.
\end{proof}

We now record the identity that underlies the duality results of this subsection. Combining Theorem~\ref{thm:hurley} with Lemma~\ref{lem:involution},
\begin{equation}\label{eq:gram}
\sigma(c)\,\sigma(c)^{\mathsf T}=\sigma(c)\,\sigma(c^*)=\sigma(cc^*),
\end{equation}
so the matrix of inner products of the rows of $\sigma(c)$ is exactly $\sigma(cc^*)$. The next two results read off self-duality and the LCD property directly from the ring element $cc^*$.

\begin{theorem}\label{thm:selfdual}
Let $n=|G|$ and $c\in\mathbb{F}_q[G]$.
\begin{enumerate}
\item[\textup{(i)}] $\mathcal{C}(c)$ is self-orthogonal if and only if $cc^*=0$.
\item[\textup{(ii)}] $\mathcal{C}(c)$ is self-dual if and only if $cc^*=0$ and $\operatorname{rank}\sigma(c)=n/2$.
\end{enumerate}
\end{theorem}

\begin{proof}
(i) The containment $\mathcal{C}(c)\subseteq\mathcal{C}(c)^\perp$ holds if and only if  all pairwise inner products of the rows of $\sigma(c)$ vanish, i.e.\ $\sigma(c)\sigma(c)^{\mathsf T}=0$. By \eqref{eq:gram} this is $\sigma(cc^*)=0$, equivalently $cc^*=0$ since $\sigma$ is injective. (ii) A self-orthogonal code is self-dual precisely when $\dim\mathcal{C}(c)=\operatorname{rank}\sigma(c)=n/2$.
\end{proof}

\begin{theorem}\label{thm:lcd}
A group code $\mathcal{C}(c)$ is an LCD code, i.e.\ $\mathcal{C}(c)\cap\mathcal{C}(c)^\perp=\{0\}$, if and only if
\[
\operatorname{rank}\sigma(cc^*)=\operatorname{rank}\sigma(c).
\]
\end{theorem}

\begin{proof}
The radical of the inner product restricted to $\mathcal{C} = \mathcal{C}(c)$ is exactly $\mathcal{C} \cap \mathcal{C}^\perp$, so
\[
\dim(\mathcal{C} \cap \mathcal{C}^\perp) = \dim\mathcal{C} - \operatorname{rank}\Gamma,
\]
where $\Gamma$ is the Gram matrix of any spanning set of $\mathcal{C}$. We take the rows of the generating matrix $\sigma(c)$ as the spanning set; by \eqref{eq:gram} their Gram matrix is
\[
\Gamma = \sigma(c)\sigma(c)^{\mathsf T} = \sigma(cc^*).
\]
Since $\dim\mathcal{C} = \operatorname{rank}\sigma(c)$, we obtain
\[
\dim(\mathcal{C} \cap \mathcal{C}^\perp) = \operatorname{rank}\sigma(c) - \operatorname{rank}\sigma(cc^*).
\]
Hence $\mathcal{C} \cap \mathcal{C}^\perp = \{0\}$ if and only if $\operatorname{rank}\sigma(cc^*) = \operatorname{rank}\sigma(c)$.
\end{proof}

\begin{corollary}\label{cor:lcd-idem}
Let $\mathcal{C}(c)=\mathbb{F}_q[G]\,e$ for an idempotent $e$ with $e=e^*$.
Then $\mathcal{C}(c)$ is LCD, and $\mathbb{F}_q[G]=\mathbb{F}_q[G]e\oplus
\mathbb{F}_q[G](1-e)$ is an orthogonal direct sum.
\end{corollary}

\begin{proof}
Let $c = e$. Since $e^* = e$ and $e^2 = e$, we have $cc^* = ee^* = e^2 = e$, so $\operatorname{rank}\sigma(cc^*) = \operatorname{rank}\sigma(c)$; by Theorem~\ref{thm:lcd}, $\mathcal{C}(c)$ is LCD. For the orthogonal direct sum, Proposition~\ref{prop:dual-is-group} gives
\[
\mathcal{C}(c)^\perp = \{v \in \mathbb{F}_q[G] : ve^* = 0\} = \{v \in \mathbb{F}_q[G] : ve = 0\} = \mathbb{F}_q[G](1-e),
\]
the last equality because $1-e$ is the complementary idempotent. Finally $1 = e + (1-e)$ yields
\[
\mathbb{F}_q[G] = \mathbb{F}_q[G]e \oplus \mathbb{F}_q[G](1-e),
\]
and the second summand is exactly $\mathcal{C}(c)^\perp$, so the sum is orthogonal.
\end{proof}

\subsection{The Group Ring $\mathbb{F}_q[D_n]$}
Consider the dihedral group $D_n=\langle a,b\mid a^n=b^2=e,\ bab^{-1}=a^{-1}\rangle$ of order $2n$. Since $ba=a^{-1}b$, and hence $ab=ba^{-1}$, the elements of $D_n$ admit the two natural listings
\begin{enumerate}
    \item $D_n=\{e,a,a^2,\dots,a^{n-1},\,b,ba,ba^2,\dots,ba^{n-1}\}$,
    \item $D_n=\{e,a,a^2,\dots,a^{n-1},\,b,ab,a^2b,\dots,a^{n-1}b\}$.
\end{enumerate}

\begin{remark}{\em
The two listings contain the same elements in different positions: in the first, $ba$ occupies position $n+2$, while in the second the element in that position is $a^{n-1}b=a^{-1}b=ba$.
}\end{remark}

The dihedral structure is governed by the following dichotomy, which explains the simultaneous appearance of circulant and reverse circulant blocks.

\begin{lemma}\label{lem:dichotomy}
In $\mathbb{F}_q[D_n]$, left multiplication by $a$ stabilizes the rotation set $\{a^i\}$ and the reflection set $\{ba^i\}$, acting as $a^i\mapsto a^{i+1}$ on the former and as $ba^i\mapsto ba^{i-1}$ on the latter (indices read modulo $n$).
\end{lemma}

\begin{proof}
Clearly $a\cdot a^i=a^{i+1}$. Using $ab=ba^{-1}$ we obtain $a\cdot ba^i=(ab)a^i=ba^{-1}a^i=ba^{i-1}$.
\end{proof}

\subsubsection{Form 1.}
With the listing $\{e,a,\dots,a^{n-1},b,ba,\dots,ba^{n-1}\}$, every element of $\mathbb{F}_q[D_n]$ has the form
\[
c=\alpha_0+\alpha_1a+\cdots+\alpha_{n-1}a^{n-1}
+\beta_0b+\beta_1ba+\cdots+\beta_{n-1}ba^{n-1},\qquad \alpha_i,\beta_i\in\mathbb{F}_q.
\]
The first row of the generating matrix is the coefficient vector of $c$; the second is that of $ac$. Since $a^n=e$ and, by Lemma~\ref{lem:dichotomy}, $a\cdot ba^i=ba^{i-1}$,
\[
ac=\alpha_{n-1}+\alpha_0a+\cdots+\alpha_{n-2}a^{n-1}
+\beta_1b+\beta_2ba+\cdots+\beta_{n-1}ba^{n-2}+\beta_0ba^{n-1}.
\]
Continuing in this way, the rows indexed by $e,a,\dots,a^{n-1}$ produce a circulant block in the rotation coordinates and a reverse circulant block in the reflection coordinates, while the rows indexed by $b,ba,\dots,ba^{n-1}$ produce the same two blocks interchanged (using $b^2=e$). Hence
\[
\mathcal{G}_1=\begin{pmatrix}
\alpha_0 & \alpha_1 & \cdots & \alpha_{n-1} & \beta_0 & \beta_1 & \cdots & \beta_{n-1} \\
\alpha_{n-1} & \alpha_0 & \cdots & \alpha_{n-2} & \beta_1 & \beta_2 & \cdots & \beta_0 \\
\vdots & \vdots & \ddots & \vdots & \vdots & \vdots & \ddots & \vdots \\
\alpha_1 & \alpha_2 & \cdots & \alpha_0 & \beta_{n-1} & \beta_0 & \cdots & \beta_{n-2} \\
\beta_0 & \beta_1 & \cdots & \beta_{n-1} & \alpha_0 & \alpha_1 & \cdots & \alpha_{n-1} \\
\beta_1 & \beta_2 & \cdots & \beta_0 & \alpha_{n-1} & \alpha_0 & \cdots & \alpha_{n-2} \\
\vdots & \vdots & \ddots & \vdots & \vdots & \vdots & \ddots & \vdots \\
\beta_{n-1} & \beta_0 & \cdots & \beta_{n-2} & \alpha_1 & \alpha_2 & \cdots & \alpha_0
\end{pmatrix}
=\begin{pmatrix} A & B \\ B & A \end{pmatrix},
\]
where
\[
A=\Circ\big(\alpha_0,\alpha_1,\dots,\alpha_{n-1}\big),\qquad
B=\RevCirc\big(\beta_0,\beta_1,\dots,\beta_{n-1}\big).
\]

\subsubsection{Form 2.}
With the listing $\{e,a,\dots,a^{n-1},b,ab,\dots,a^{n-1}b\}$, every element has the form
\[
c=\alpha_0+\alpha_1a+\cdots+\alpha_{n-1}a^{n-1}
+\beta_0b+\beta_1ab+\cdots+\beta_{n-1}a^{n-1}b.
\]
Proceeding as above (now $b\cdot a^i=a^{-i}b$), the generating matrix is
\[
\mathcal{G}_2=\begin{pmatrix} A & D \\[2pt] D^{\mathsf T} & A^{\mathsf T}\end{pmatrix},
\qquad
A=\Circ\big(\alpha_0,\dots,\alpha_{n-1}\big),\quad
D=\Circ\big(\beta_0,\dots,\beta_{n-1}\big).
\]

\begin{remark}{\em
In Section 4, we establish the relationship between these two matrices and the corresponding relationship between the group codes they generate. Further listings are possible; for instance $\{e,b,a,ba,\dots,a^{n-1},ba^{n-1}\}$ leads to a block-circulant matrix built from $n$ blocks of order $2\times 2$, which is less convenient to display. We therefore restrict attention to Forms~1 and~2, whose blocks are the two $n\times n$ matrices $A,B$ (Form~1) and $A,D$ (Form~2).
}\end{remark}

We now specialize the duality criteria of Subsection~\ref{subsec:duality} to the dihedral case. Recall from Lemma~\ref{lem:dichotomy} that every reflection $ba^i$ is an involution, so for $c=\sum_{i}\alpha_i a^i+\sum_i\beta_i ba^i$ one has $c^*=\sum_i\alpha_i a^{-i}+\sum_i\beta_i ba^i$.

\begin{corollary}\label{cor:dihedral-sd}
Let $\mathcal{G}_1=\sigma(c)=\big(\begin{smallmatrix}A&B\\B&A\end{smallmatrix}\big)$,
where \[A=\Circ(\alpha_0,\dots,\alpha_{n-1}),~~~~B=\RevCirc(\beta_0,\dots,\beta_{n-1}).\]
The code $\mathcal{C}(c)$ is self-orthogonal if and only if
\[
AA^{\mathsf T}+B^2=O \qquad\text{and}\qquad 2AB=O.
\]
In particular, if the characteristic of $\mathbb{F}_q$ is $2$, the second
condition is trivially satisfied, and $\mathcal{C}(c)$ is self-orthogonal
if and only if $AA^{\mathsf T}+B^2=O$.
\end{corollary}

\begin{proof}
Transposing the block matrix $\sigma(c)=\big(\begin{smallmatrix}A&B\\B&A\end{smallmatrix}\big)$
and using $B^{\mathsf T}=B$ (Lemma~\ref{lem:revcirc}) gives
\[
\sigma(c)^{\mathsf T}=\begin{pmatrix}A^{\mathsf T}&B^{\mathsf T}\\B^{\mathsf T}&A^{\mathsf T}\end{pmatrix}
=\begin{pmatrix} A^{\mathsf T}&B\\ B&A^{\mathsf T}\end{pmatrix}.
\]
Multiplying out gives
\[
\sigma(c)\sigma(c)^{\mathsf T}=
\begin{pmatrix} AA^{\mathsf T}+B^2 & AB+BA^{\mathsf T}\\[2pt]
AB+BA^{\mathsf T} & AA^{\mathsf T}+B^2\end{pmatrix}.
\]
By Lemma~\ref{lem:revcirc}, $B = CJ$ where $C$ is a circulant matrix and $J$
is the exchange matrix. Since $A$ is circulant, $JAJ=A^{\mathsf T}$, and as
$J^2=I$ this gives the identity $JA^{\mathsf T}=AJ$. Since $A$ and $C$ are
circulant they commute, so
\[
BA^{\mathsf T} = (CJ)A^{\mathsf T} = C(JA^{\mathsf T}) = C(AJ) = (CA)J = (AC)J
= A(CJ) = AB.
\]
Consequently, $AB+BA^{\mathsf T} = 2AB$. The claim then follows from
Theorem~\ref{thm:selfdual}(i).
\end{proof}
 
\begin{example}\label{ex:d4}{\em
Take $q=2$, $n=4$, and
\[
c=a^3+ba+ba^2+ba^3\in\mathbb{F}_2[D_4],
\]
i.e.\ $\alpha=(0,0,0,1)$ and $\beta=(0,1,1,1)$, so $A=\Circ(0,0,0,1)$ and $B=\RevCirc(0,1,1,1)$. A direct check gives $AA^{\mathsf T}=B^2=I$, hence $AA^{\mathsf T}+B^2=O$, and $AB=BA^{\mathsf T}$, hence $AB+BA^{\mathsf T}=O$ over $\mathbb{F}_2$, so by Corollary~\ref{cor:dihedral-sd} the code is self-orthogonal; here
\[
\mathcal{G}_1=\sigma(c)=
\begin{pmatrix}
0&0&0&1&0&1&1&1\\
1&0&0&0&1&1&1&0\\
0&1&0&0&1&1&0&1\\
0&0&1&0&1&0&1&1\\
0&1&1&1&0&0&0&1\\
1&1&1&0&1&0&0&0\\
1&1&0&1&0&1&0&0\\
1&0&1&1&0&0&1&0
\end{pmatrix},\qquad \operatorname{rank}\sigma(c)=4=\tfrac{1}{2}\,|D_4|.
\]
By Theorem~\ref{thm:selfdual}(ii) the row space $\mathcal{C}(c)$ is a self-dual $[8,4,4]$ binary code, with weight enumerator $1+14z^4+z^8$; this is the (unique) doubly even self-dual code of length $8$, that is, the extended Hamming code. Thus the dihedral group ring $\mathbb{F}_2[D_4]$ already realizes an optimal self-dual code through a single element satisfying $cc^*=0$.
}\end{example}

\begin{remark}{\em
The same machinery produces LCD codes via Theorem~\ref{thm:lcd}. For instance, taking $q=3$, $n=3$, and $c=e+a+2a^2+b+2ba+ba^2\in\mathbb{F}_3[D_3]$, i.e.\ coefficient vector $(1,1,2,1,2,1)$ in the Form~1 listing, gives $\operatorname{rank}\sigma(cc^*)=\operatorname{rank}\sigma(c)=3$, so the resulting ternary $[6,3,2]$ code is LCD.
}\end{remark}

\begin{remark}{\em
The generalized quaternion group of order $4n$,
\[
Q_{4n}=\langle a,b\mid a^{2n}=1,\ a^n=b^2,\ b^{-1}ab=a^{-1}\rangle,\qquad n\ge 2,
\]
(with $Q_8$ recovered for $n=2$) shares the relation $b^{-1}ab=a^{-1}$ with $D_n$. Its elements may be listed as $\{1,a,\dots,a^{2n-1},b,ab,\dots,a^{2n-1}b\}$ or as $\{1,a,\dots,a^{2n-1},b,ba,\dots,ba^{2n-1}\}$, and the generating matrices for $\mathbb{F}_q[Q_{4n}]$ are obtained by the same reasoning as for $D_n$.
}\end{remark}

\subsection{The Group Ring $\mathbb{F}_q[C_l\times C_m]$}

We start with the general case. Let $C_l\times C_m=\langle x,y\mid x^l=y^m=1,\ xy=yx\rangle$, of order $n=lm$, with elements $\{x^iy^j:0\le i\le l-1,\ 0\le j\le m-1\}$.

\subsubsection{Form 1.}
List the group as $C_l\times C_m=\{g_1,\dots,g_n\}$ with
\[
g_{1+i+lj}=x^iy^j,\qquad 0\le i\le l-1;\ 0\le j\le m-1,
\]
and let $f_1=\sum_{r=1}^n\alpha_r g_r$. Computing the coefficients of $g_rf_1$ row by row gives
\[
\mathcal{G}_1=\Circ(A_1,A_2,\dots,A_m),
\]
where the $A_s$ are the $l\times l$ circulants
\[
A_s=\Circ\big(\alpha_{1+l(s-1)},\alpha_{2+l(s-1)},\dots,\alpha_{l+l(s-1)}\big),\qquad
s=1,2,\dots,m.
\]

\subsubsection{Form 2.}
List the group with
\[
g_{1+j+mi}=x^iy^j,\qquad 0\le j\le m-1;\ 0\le i\le l-1,
\]
and let $f_2=\sum_{r=1}^n\alpha_r g_r$. Then
\[
\mathcal{G}_2=\Circ(A_1,A_2,\dots,A_l),
\]
where the $A_t$ are the $m\times m$ circulants
\[
A_t=\Circ\big(\alpha_{1+m(t-1)},\alpha_{2+m(t-1)},\dots,\alpha_{m+m(t-1)}\big),\qquad
t=1,2,\dots,l.
\]

\begin{remark}{\em
Since $C_l\times C_m$ is abelian, $\mathcal{C}(f)$ is an abelian code, i.e.\ an ideal of $\mathbb{F}_q[s,t]/(s^l-1,t^m-1)$. When $\gcd(lm,q)=1$ this ring is semisimple, and Lemma~\ref{lem:circ-algebra} together with the block-circulant structure reduces the computation of $\dim\mathcal{C}(f)=\operatorname{rank}\mathcal{G}$ to a product of one-variable gcd computations through the Fourier (CRT) decomposition.
}\end{remark}

\begin{remark}\label{rem:multi}{\em
The construction extends to any finite direct product of cyclic groups. For $C_k\times C_l\times C_m=\{g_1,\dots,g_n\}$ with $n=klm$, fixing the word $x^ay^bz^c$ and varying the order of the index triple $(a,b,c)$ yields six listings, e.g.
\[
g_{1+a+kb+klc}=x^ay^bz^c \quad\text{and}\quad g_{1+a+kc+klb}=x^ay^bz^c,
\]
and so on for the remaining four permutations. 
}\end{remark}

\begin{remark}{\em
For the semidirect product $C_l\rtimes C_m=\langle x,y\mid x^l=1,\ y^m=1,\ yxy^{-1}=x^k\rangle$ with $\gcd(k,l)=1$, the same procedure (now with the twist $yx=x^ky$ replacing commutativity) produces the corresponding generating matrices for fixed $l,m$.

Let
$$ C_l\rtimes C_m=\langle x,y\mid x^l=y^m=e,\ yxy^{-1}=x^k\rangle, \qquad \gcd(k,l)=1,\ \ k^m\equiv 1\pmod l, $$
a group of order $n=lm$. Taking $k\equiv1\pmod l$ recovers $C_l\times C_m$, and taking $m=2,\ k\equiv-1\pmod l$ recovers $D_l$.

Repeated use of $yx=x^ky$ gives, for every $v\ge0$,
$$ y^vx^i=x^{ik^v\bmod l}\,y^v,\qquad\text{equivalently}\qquad x^iy^v=y^v x^{ik^{-v}\bmod l}. $$

For $t\in\mathbb{Z}_l^\times$, the $t$-circulant $\Circ_t(\gamma_0,\dots,\gamma_{l-1})$ is the $l\times l$ matrix with $(r,c)$-entry $\gamma_{(c-tr)\bmod l}$. Thus $\Circ_1=\Circ$ and $\Circ_{-1}=\RevCirc$.

List $C_l\rtimes C_m=\{g_1,\dots,g_n\}$ with
$$ g_{1+i+lj}=y^jx^i,\qquad 0\le i\le l-1;\ 0\le j\le m-1, $$
and let $f=\sum_{j=0}^{m-1}\sum_{i=0}^{l-1}\alpha_i^{(j)}\,y^jx^i$. Using the twisting relation above,
$$ (y^vx^u)f=\sum_{j,i}\alpha_i^{(j)}\,y^{v+j}x^{\,(uk^{-j}+i)\bmod l}, $$
so re-indexing by $j'=(v+j)\bmod m$, $i'=(uk^{-j}+i)\bmod l$ shows that the coefficient of $y^{j'}x^{i'}$ in $(y^vx^u)f$ equals $\alpha^{(s)}_{(i'-uk^{-s})\bmod l}$, where $s=(j'-v)\bmod m$. Hence, writing
$$ B_s:=\Circ_{\,k^{-s}}\!\big(\alpha_0^{(s)},\alpha_1^{(s)},\dots,\alpha_{l-1}^{(s)}\big),\qquad s=0,1,\dots,m-1, $$
the generating matrix is the block circulant of these twisted blocks:
$$ \mathcal{G}_1=\Circ\big(B_0,B_1,\dots,B_{m-1}\big).$$

If $k\equiv1\pmod l$, every $k^{-s}=1$ and each $B_s$ is a circulant, giving exactly the Form~1 matrix of $C_l\times C_m$. If $m=2$ and $k\equiv-1\pmod l$, then $B_0=A:=\Circ(\alpha^{(0)})$ and $B_1=\RevCirc(\alpha^{(1)})=:B$, so
$$ \mathcal{G}_1=\begin{pmatrix}A&B\\B&A\end{pmatrix}, $$
recovering the Form~1 matrix of $D_l$.
}\end{remark}
\subsection{The Group Ring $\mathbb{F}_q[C_l\times D_m]$}
The number of generators and relations governs the number of natural listings. For $\mathbb{F}_q[C_l\times D_m]$ there are three generators and three relations, giving several listings. We first illustrate with $l=m=3$. Here $C_3=\{1,x,x^2\}$ and $D_3=\langle y,z\mid y^3=z^2=1,\ zyz^{-1}=y^{-1}\rangle$, so $D_3$ may be listed as $\{1,y,y^2,z,zy,zy^2\}$ or as $\{1,y,y^2,z,yz,y^2z\}$. Combining each with the listing of $C_3$ yields, among others, the four forms
\begin{align*}
&\text{Form 1:}\quad \{x^iz^jy^k:\ i=0,1,2;\ k=0,1,2;\ j=0,1\},\\
&\text{Form 2:}\quad \{z^jy^kx^i:\ k=0,1,2;\ j=0,1;\ i=0,1,2\},\\
&\text{Form 3:}\quad \{x^iy^kz^j:\ i=0,1,2;\ k=0,1,2;\ j=0,1\},\\
&\text{Form 4:}\quad \{y^kz^jx^i:\ k=0,1,2;\ j=0,1;\ i=0,1,2\}.
\end{align*}

In general we set
\[
C_l\times D_m=\langle x,y,z\mid x^l=y^m=z^2=1,\ xy=yx,\ xz=zx,\ zyz^{-1}=y^{-1}\rangle,
\]
a group of order $n=2ml$.

\subsubsection{Form 1.}
List $C_l\times D_m=\{g_1,\dots,g_n\}$ with
\[
g_{1+i+kl+jml}=x^iz^jy^k,\qquad 0\le i\le l-1;\ 0\le k\le m-1;\ 0\le j\le 1,
\]
and let $f_1=\sum_{r=1}^n\alpha_r g_r$. Since $x$ is central while $z$ inverts $y$, the $x$-coordinate produces $l\times l$ circulant blocks, the $y$-coordinate arranges them in a (reverse) circulant pattern, and the $z$-coordinate gives the outer order-two block structure. The result is
\[
\mathcal{G}_1=\begin{pmatrix} M_1 & M_2 \\[2pt] M_2 & M_1\end{pmatrix}=\Circ(M_1,M_2),
\]
where $M_1=\Circ(A_1,\dots,A_m)$ and $M_2=\RevCirc(A_{m+1},\dots,A_{2m})$, with the $A_s$ the $l\times l$ circulants
\[
A_s=\Circ\big(\alpha_{(s-1)l+1},\alpha_{(s-1)l+2},\dots,\alpha_{(s-1)l+l}\big),\qquad
s=1,2,\dots,2m.
\]

\begin{example}{\em
Take $l=m=3$ and the listing $g_{1+i+3k+9j}=x^i z^j y^k$, writing $r$ for $\alpha_r$. We obtain $\sigma(f_1)=\Circ(M_1,M_2)$ with $M_1=\Circ(A_1,A_2,A_3)$, $M_2=\RevCirc(A_4,A_5,A_6)$, and
\[
A_1=\Circ(1,2,3),\ A_2=\Circ(4,5,6),\ A_3=\Circ(7,8,9),
\]
\[
A_4=\Circ(10,11,12),\ A_5=\Circ(13,14,15),\ A_6=\Circ(16,17,18),
\]
i.e.\ the $18\times18$ matrix
\[
\sigma(f_1)=
\setlength{\arraycolsep}{2.6pt}\renewcommand{\arraystretch}{1.5}
\left(\begin{smallmatrix}
1&2&3&4&5&6&7&8&9&10&11&12&13&14&15&16&17&18\\
3&1&2&6&4&5&9&7&8&12&10&11&15&13&14&18&16&17\\
2&3&1&5&6&4&8&9&7&11&12&10&14&15&13&17&18&16\\
7&8&9&1&2&3&4&5&6&13&14&15&16&17&18&10&11&12\\
9&7&8&3&1&2&6&4&5&15&13&14&18&16&17&12&10&11\\
8&9&7&2&3&1&5&6&4&14&15&13&17&18&16&11&12&10\\
4&5&6&7&8&9&1&2&3&16&17&18&10&11&12&13&14&15\\
6&4&5&9&7&8&3&1&2&18&16&17&12&10&11&15&13&14\\
5&6&4&8&9&7&2&3&1&17&18&16&11&12&10&14&15&13\\
10&11&12&13&14&15&16&17&18&1&2&3&4&5&6&7&8&9\\
12&10&11&15&13&14&18&16&17&3&1&2&6&4&5&9&7&8\\
11&12&10&14&15&13&17&18&16&2&3&1&5&6&4&8&9&7\\
13&14&15&16&17&18&10&11&12&7&8&9&1&2&3&4&5&6\\
15&13&14&18&16&17&12&10&11&9&7&8&3&1&2&6&4&5\\
14&15&13&17&18&16&11&12&10&8&9&7&2&3&1&5&6&4\\
16&17&18&10&11&12&13&14&15&4&5&6&7&8&9&1&2&3\\
18&16&17&12&10&11&15&13&14&6&4&5&9&7&8&3&1&2\\
17&18&16&11&12&10&14&15&13&5&6&4&8&9&7&2&3&1
\end{smallmatrix}\right).
\]
}\end{example}

\subsubsection{Form 2.}
List $C_l\times D_m=\{g_1,\dots,g_n\}$ with
\[
g_{1+k+jm+2mi}=z^jy^kx^i,\qquad 0\le k\le m-1;\ 0\le j\le 1;\ 0\le i\le l-1,
\]
and let $f_2=\sum_{r=1}^n\alpha_r g_r$. Here the central factor $x$ is outermost, so the generating matrix is block circulant over $C_l$ with $D_m$-type blocks:
\[
\mathcal{G}_2=\Circ\big(M_1,M_2,\dots,M_l\big),\qquad M_s=\Circ\big(A_{2s-1},A_{2s}\big),
\quad s=1,2,\dots,l,
\]
where each $A_t, t=1,2,\dots,2l$, is an $m\times m$ matrix given by
\[
A_t=
\begin{cases}
\Circ\big(\alpha_{(t-1)m+1},\dots,\alpha_{(t-1)m+m}\big), & t \text{ odd},\\[4pt]
\RevCirc\big(\alpha_{(t-1)m+1},\dots,\alpha_{(t-1)m+m}\big), & t \text{ even}.
\end{cases}
\]
Each block $M_s$ is precisely the Form~1 dihedral matrix $\big(\begin{smallmatrix}A_{2s-1}&A_{2s}\\A_{2s}&A_{2s-1}\end{smallmatrix}\big)$ of $\mathbb{F}_q[D_m]$.

\subsubsection{Form 3.}
List $C_l\times D_m=\{g_1,\dots,g_n\}$ with
\[
g_{1+i+kl+jml}=x^iy^kz^j,\qquad 0\le i\le l-1;\ 0\le k\le m-1;\ 0\le j\le 1,
\]
and let $f_3=\sum_{r=1}^n\alpha_r g_r$. Then
\[
\mathcal{G}_3=\begin{pmatrix} M_1 & M_2 \\[3pt] M_2^{\mathsf T} & M_1^{\mathsf T}\end{pmatrix},
\]
where $M_1=\Circ(A_1,\dots,A_m)$, $M_2=\Circ(A_{m+1},\dots,A_{2m})$, and the $A_s$ are the $l\times l$ circulants
\[
A_s=\Circ\big(\alpha_{(s-1)l+1},\dots,\alpha_{(s-1)l+l}\big),\qquad s=1,2,\dots,2m.
\]
This is the $C_l$-blocked analogue of the Form~2 dihedral matrix $\big(\begin{smallmatrix}A&D\\D^{\mathsf T}&A^{\mathsf T}\end{smallmatrix}\big)$.

\begin{remark}{\em
For a block-circulant matrix $M=\Circ(B_1,B_2,B_3)$, the transpose is obtained by transposing the block arrangement \emph{and} each block:
\[
M^{\mathsf T}
=\begin{pmatrix} B_1 & B_2 & B_3 \\ B_3 & B_1 & B_2 \\ B_2 & B_3 & B_1\end{pmatrix}^{\!\mathsf T}
=\begin{pmatrix} B_1^{\mathsf T} & B_3^{\mathsf T} & B_2^{\mathsf T} \\
B_2^{\mathsf T} & B_1^{\mathsf T} & B_3^{\mathsf T} \\
B_3^{\mathsf T} & B_2^{\mathsf T} & B_1^{\mathsf T}\end{pmatrix}.
\]
When the $B_i$ are circulant, each $B_i^{\mathsf T}$ is again circulant as  \[\Circ(b_1,b_2,\dots,b_m)^{\mathsf T}=\Circ(b_1,b_m,\dots,b_2),\] so $M^{\mathsf T}$ is again block circulant with circulant blocks.
}\end{remark}

\subsubsection{Form 4.}
List $C_l\times D_m=\{g_1,\dots,g_n\}$ with
\[
g_{1+k+jm+2mi}=y^kz^jx^i,\qquad 0\le k\le m-1;\ 0\le j\le 1;\ 0\le i\le l-1,
\]
and let $f_4=\sum_{r=1}^n\alpha_r g_r$. Then
\[
\mathcal{G}_4=\Circ\big(M_1,M_2,\dots,M_l\big),\qquad
M_s=\begin{pmatrix} A_{2s-1} & A_{2s} \\[4pt] A_{2s}^{\mathsf T} & A_{2s-1}^{\mathsf T}\end{pmatrix},
\quad s=1,2,\dots,l,
\]
where the $A_t$ are the $m\times m$ circulants
\[
A_t=\Circ\big(\alpha_{(t-1)m+1},\dots,\alpha_{(t-1)m+m}\big),\qquad t=1,2,\dots,2l.
\]
Each block $M_s$ is the Form~2 dihedral matrix of $\mathbb{F}_q[D_m]$.

\begin{remark}{\em
We note that one form of the generating matrices for these block-circulant matrices arising from $D_n$, $C_l \times C_m$, and $C_l \times D_m$ has already appeared in \cite{YZ23,YZ24}. 
 }
\end{remark}

\begin{remark}{\em
The same argument yields generating matrices for $\mathbb{F}_q[C_l\times Q_{4m}]$, where $Q_{4m}$ is the generalized quaternion group of order $4m$, the only change being that the dihedral relation $zyz^{-1}=y^{-1}$ is replaced by the quaternion relations.
}\end{remark}

\section{Effect of the Ordering on the Generating Matrix}\label{sec:ordering}

Fix $a \in \mathbb{F}_q[G]$. Let $\sigma = (g_1,\ldots,g_n)$ and $\tau = (h_1,\ldots,h_n)$ be two orderings of $G$, giving generating matrices $M_\sigma$ and $M_\tau$ respectively, each in $M_n(\mathbb{F}_q)$.

\begin{lemma}[Permutation conjugation]\label{lem:perm}
Let $\pi:=\sigma^{-1}\circ\tau$, so that $h_i=g_{\pi(i)}$ for $1\le i\le n$, and let $P=(P_{ij})$ be the permutation matrix of $\pi$, i.e.\ $P_{ij}=1$ if $j=\pi(i)$ and $0$ otherwise. Then $M_\tau=PM_\sigma P^{-1}$.
\end{lemma}

\begin{proof}
The $(i,j)$-entry of $M_\sigma$ is $\alpha_{g_i^{-1}g_j}$ and that of $M_\tau$ is $\alpha_{h_i^{-1}h_j}$. Since $h_i=g_{\pi(i)}$,
\[
(M_\tau)_{ij}=\alpha_{g_{\pi(i)}^{-1}g_{\pi(j)}}.
\]
On the other hand, using $P_{ik}=1~\text{if and only if} ~k=\pi(i)$, and $(P^{-1})_{\ell j}=(P^{\mathsf T})_{\ell j}=P_{j\ell}=1~\text{if and only if}~ \ell=\pi(j)$,
\[
(PM_\sigma P^{-1})_{ij}=\sum_{k,\ell}P_{ik}(M_\sigma)_{k\ell}(P^{-1})_{\ell j}=(M_\sigma)_{\pi(i),\pi(j)}=\alpha_{g_{\pi(i)}^{-1}g_{\pi(j)}}=(M_\tau)_{ij}.
\]
Since the two matrices agree entrywise, $M_\tau=PM_\sigma P^{-1}$.
\end{proof}

\begin{theorem}[Effect of ordering on codes]\label{thm:ordering}
Let $\sigma$ and $\tau$ be two orderings of the same group $G$, and let $C_\sigma$ and $C_\tau$ be the linear codes generated by the Hurley matrices $M_\sigma$ and $M_\tau$, respectively, associated with a fixed element $a\in\mathbb{F}_q[G]$. 
\begin{enumerate}
    \item[\rm (1)] The codes $C_\sigma$ and $C_\tau$ are permutation equivalent. In particular, they have the same length, dimension, complete weight enumerator, and hence the same minimum distance.
    \item[\rm (2)] If $c\in C_\sigma$ is a minimum-weight codeword with support $S\subseteq\{1,\ldots,n\}$, then the corresponding minimum-weight codeword $cP^{-1}\in C_\tau$ has support $\pi^{-1}(S)$, where $\pi$ is the permutation associated with the change of ordering. Consequently, although the minimum distance is preserved, the coordinate positions of minimum-weight codewords may differ, and properties depending on those positions, such as symplectic self-orthogonality conditions, may also depend on the chosen ordering of $G$.
\end{enumerate}
\end{theorem}

\begin{proof}
(1) By Lemma~\ref{lem:perm}, $M_\tau=PM_\sigma P^{-1}$, so $C_\tau=\operatorname{RowSpace}(PM_\sigma P^{-1})$. Left multiplication by $P$ permutes the rows of $M_\sigma$ and does not change its row space; right multiplication by $P^{-1}$ permutes the columns, hence the coordinates of every codeword. Therefore, we have 
\[
C_\tau=\{\,cP^{-1}:c\in C_\sigma\,\}.
\]
We obtain that $C_\sigma$ and $C_\tau$ are permutation equivalent, with identical weight enumerators; in particular, $\dim C_\sigma=\dim C_\tau$ and $d(C_\sigma)=d(C_\tau)$. (The dimensions also agree directly, as $\rank M_\sigma=\rank(PM_\sigma P^{-1})=\rank M_\tau$.)

\medskip
(2) By Part~(1), $C_\tau=\{cP^{-1}:c\in C_\sigma\}$ with $P$ the permutation matrix of $\pi$. Let $c\in C_\sigma$ be a minimum-weight codeword with $S=\operatorname{supp}(c)$. Right multiplication by $P^{-1}$ permutes the coordinates, so
\[
\operatorname{supp}(cP^{-1})=\pi^{-1}(S),\qquad \operatorname{wt}(cP^{-1})=\operatorname{wt}(c),
\]
and $cP^{-1}$ is a minimum-weight codeword of $C_\tau$. The minimum distance is unchanged, but the location of the nonzero coordinates generally differ, so properties that depend on coordinate positions, such as certain self-orthogonality conditions, may vary with the ordering of $G$.
\end{proof}

\begin{corollary} \label{cor:listing}
For a fixed $a\in\mathbb{F}_q[G]$, any two orderings of $G$ produce permutation-equivalent group codes. In particular, all listings of $D_n$, $C_l\times C_m$, $C_k\times C_l\times C_m$, and $C_l\times D_m$ considered in Section~\ref{sec:gencodes} generate permutation-equivalent codes, with the same length, dimension, and complete weight enumerator.
\end{corollary}

\begin{proof}
Immediate from Theorem~\ref{thm:ordering}(1): any two orderings differ by a permutation $\pi$ of $G$, and the corresponding codes are related by the coordinate permutation $P^{-1}$.
\end{proof}

\begin{remark}\label{rem:ordering}{\em
The support positions of minimum-weight codewords depend on the ordering,
whereas the value of the minimum distance is an invariant of the abstract code.
Crucially, while Euclidean and Hermitian self-orthogonality
are invariant under coordinate permutations, symplectic
self-orthogonality is not, as the symplectic form $\Omega$ depends on the
exact bi-partition of the coordinates. This sensitivity to ordering is what
motivates the careful choice of listings in the symplectic quantum code
constructions of Section~\ref{sec:qecc}.
}\end{remark}

\section{Distinct Codes from Non-Isomorphic Groups}\label{sec:distinct}

We now show that non-isomorphic groups of the same order can yield genuinely inequivalent codes. We first record a structural lemma on group algebras.

\begin{lemma}[Wedderburn decomposition of group algebras]\label{lem:wedderburn_group_algebra}
Let $G$ be a finite group and $\mathbb{F}_q$ a finite field with $\gcd(|G|,q)=1$. By Maschke's theorem $\mathbb{F}_q[G]$ is semisimple, and by the Wedderburn--Artin theorem it decomposes uniquely (up to the order of the factors) as a direct product of matrix algebras over extension fields of $\mathbb{F}_q$:
\[
\mathbb{F}_q[G] \cong \prod_{i=1}^{r} M_{n_i}(\mathbb{F}_{q^{f_i}}).
\]
\end{lemma}

\begin{proof}
Since $\gcd(|G|,q)=1$, the characteristic of $\mathbb{F}_q$ does not divide $|G|$, so $\mathbb{F}_q[G]$ is semisimple by Maschke's theorem. The decomposition is the Wedderburn--Artin theorem for finite-dimensional semisimple algebras over a finite field, each simple component being a matrix algebra over a finite division ring, which is a field by Wedderburn's little theorem.
\end{proof}

The multiset of pairs $\mathcal{W}_G := \{(n_i,f_i)\}_{i=1}^r$ is a complete isomorphism invariant of $\mathbb{F}_q[G]$: two such algebras are isomorphic if and only if their multisets of pairs coincide.

\begin{theorem}[Non-isomorphic groups and code parameters]\label{thm:distinct}
Let $G_1$ and $G_2$ be groups of the same order $n$ with $\gcd(n,q)=1$, and let
\[
D_i=\bigl\{\dim_{\mathbb{F}_q}\mathcal{C}(a):a\in\mathbb{F}_q[G_i]\bigr\}
\]
be the set of dimensions realized by group codes over $\mathbb{F}_q[G_i]$. If $D_1\neq D_2$, then:
\begin{enumerate}
\item[\rm(1)] there exists $a_1\in\mathbb{F}_q[G_1]$ such that the parameters $[n,k,d]_q$ of $\mathcal{C}(a_1)$ cannot be realized by any code $\mathcal{C}(b)$ with $b\in\mathbb{F}_q[G_2]$ (or symmetrically for $G_2$ over $G_1$), and
\item[\rm(2)] the sets of achievable parameter pairs
\[
\bigl\{(k,d):\mathcal{C}(a)\text{ is an }[n,k,d]_q\text{ code},\ a\in\mathbb{F}_q[G_i]\bigr\}
\]
differ for $i=1$ and $i=2$.
\end{enumerate}
\end{theorem}

\begin{proof}
Since $\gcd(n,q)=1$, each $\mathbb{F}_q[G_i]$ is semisimple, so by Lemma~\ref{lem:wedderburn_group_algebra}
\[
\mathbb{F}_q[G_i]\cong\prod_j M_{n_{ij}}(\mathbb{F}_{q^{f_{ij}}}).
\]
For \(a\in\mathbb{F}_q[G_i]\) the group code \(\mathcal{C}(a)\) is the row space of \(\sigma(a)\), which is isomorphic to the principal left ideal \(\mathbb{F}_q[G_i]\,a\); hence \(\dim_{\mathbb{F}_q}\mathcal{C}(a)=\dim_{\mathbb{F}_q}\mathbb{F}_q[G_i]a\). In a single component \(M_n(\mathbb{F}_{q^f})\), the minimal left ideals are
the columns, each \(\cong \mathbb{F}_{q^f}^{\,n}\); a left ideal
isomorphic to \(c\) copies \((0\le c\le n)\) has
\(\mathbb{F}_{q^f}\)-dimension \(cn\), hence
\(\mathbb{F}_q\)-dimension \(cnf\). Since left ideals of a direct product are direct products of left ideals of the factors, the set of attainable dimensions over \(\mathbb{F}_q[G_i]\) is
\[
D_i=\Bigl\{\textstyle\sum_j c_j n_{ij} f_{ij} : 0\le c_j\le n_{ij}\Bigr\}.
\]
Now suppose $D_1\neq D_2$, and (without loss of generality) take $k\in D_1\setminus D_2$. Since $k\in D_1$, there is a left ideal $\mathbb{F}_q[G_1]a_1$ of dimension $k$, so $\mathcal{C}(a_1)$ has parameters $[n,k,d_1]_q$ for some $d_1$. If some $\mathcal{C}(b)$ with $b\in\mathbb{F}_q[G_2]$ had the same parameters, its dimension would be $k\in D_2$, a contradiction. Hence no group code over $\mathbb{F}_q[G_2]$ has parameters $[n,k,d_1]_q$, which proves~(1). Part (2) is then immediate: the pair $(k,d_1)$ is attainable over $\mathbb{F}_q[G_1]$ but not over $\mathbb{F}_q[G_2]$, so the attainable parameter sets differ.
\end{proof}

\begin{remark}\label{rem:dim_sets_insufficient}{\em
While Theorem~\ref{thm:distinct} establishes that distinct dimension sets
$D_i$ force distinct code parameters, the converse is false. It is entirely
possible for non-isomorphic groups to share identical dimension sets (i.e.,
their group algebras decompose into components of identical dimensions) while
still producing entirely distinct minimum distance profiles. This deeper
structural divergence cannot be detected by the Wedderburn decomposition alone
and serves as the primary motivation for the exhaustive computational comparisons
in Section~\ref{sec:comparison}.
}\end{remark}

\begin{example}[Groups of order $4$]\label{ex:z2z2}{\em
Let $n=4$ and $q=3$. The two groups of order $4$ are the cyclic group $G_1=C_4=\langle a\mid a^4=e\rangle$ and the Klein four-group $G_2=C_2\times C_2=\langle a,b\mid a^2=b^2=e,\ ab=ba\rangle$, and $\gcd(4,3)=1$.

Over $\mathbb{F}_3$ we have $x^4-1=(x-1)(x+1)(x^2+1)$ with $x^2+1$ irreducible, so
\[
\mathbb{F}_3[C_4]\cong\mathbb{F}_3\times\mathbb{F}_3\times\mathbb{F}_9,
\]
whereas $C_2\times C_2$ has exponent $2$ and four one-dimensional representations over $\mathbb{F}_3$, giving
\[
\mathbb{F}_3[C_2\times C_2]\cong\mathbb{F}_3\times\mathbb{F}_3\times\mathbb{F}_3\times\mathbb{F}_3.
\]
These algebras are non-isomorphic. Nevertheless, both realize the same set of code dimensions, $D_1=D_2=\{0,1,2,3,4\}$, and in fact the same parameter sets. For instance $1+a^2\in\mathbb{F}_3[C_4]$ has $\sigma(1+a^2)=\Circ(1,0,1,0)$ of rank $2$, giving a $[4,2,2]_3$ code, while $1+g$ (with $g\neq e$) over $C_2\times C_2$ also gives a $[4,2,2]_3$ code; both have the same weight enumerator $1+4z^2+4z^4$ and are permutation equivalent. A direct check at the remaining dimensions shows that the maximal distance attainable agrees for the two groups at every $k$.

This illustrates why the hypothesis $D_1\neq D_2$ of Theorem~\ref{thm:distinct} is needed: non-isomorphism of the group algebras alone does not force distinct parameters. Genuine differences appear once the distance profiles diverge; the order-$8$ comparison in Table~\ref{tab:gf3_n8} provides such an instance, where $C_8$ and $Q_8$ strictly outperform $C_2\times C_4$, $D_4$, and $C_2^3$ at dimensions $k=2,3,5$.
}\end{example}

\section{Comparison of Group Codes Across Group Structures}
\label{sec:comparison}
We now compare the group codes arising from every group of a fixed order. Let $G$ be a group of order $n$ and let $\mathbb{F}_q$ be a finite field. For $a = \sum_{g \in G} a_g\, g \in \mathbb{F}_q[G]$, the associated \emph{group code} $\mathcal{C}_a$ is the row space of the $n \times n$ matrix $M_a=\sigma(a)$, whose $i$-th row is the coefficient vector of $g_i a$ for a fixed ordering $g_1, \ldots, g_n$ of $G$. For each attainable dimension $k$, we record the best achievable minimum distance
\[
  d_{\max}(k, G, q)
  = \max\bigl\{\, d(\mathcal{C}_a) :\ a \in \mathbb{F}_q[G],\ \dim(\mathcal{C}_a) = k \,\bigr\}.
\]
All computations were performed in \textsc{Magma} \cite{Mag} by exhaustive search over all nonzero elements of $\mathbb{F}_q[G]$, taking one representative per scalar class.

\subsection{Ternary Codes of Length $8$: Groups of Order $8$}
\label{subsec:gf3_n8}

There are exactly five groups of order $8$ up to isomorphism: the cyclic group $C_8$, the direct product $C_2 \times C_4$, the dihedral group $D_4$, the quaternion group $Q_8$, and the elementary abelian group $C_2^3$. Table~\ref{tab:gf3_n8} reports $d_{\max}(k, G, 3)$ for each group; boldface indicates the largest minimum distance achieved for each dimension $k$.

\begin{table}[ht]
\centering
\caption{Best minimum distances for $[8,k,d]_3$ group codes over $\mathbb F_3$ for all five groups of order $8$. Boldface entries indicate the maximum distance achieved for that dimension across all five groups.}
\label{tab:gf3_n8}
\begin{tabular}{c|ccccc}
\toprule
$k$ & $C_8$ & $C_2 \times C_4$ & $D_4$ & $Q_8$ & $C_2^3$ \\
\midrule
1 & \textbf{8} & \textbf{8} & \textbf{8} & \textbf{8} & \textbf{8} \\
2 & \textbf{6} & 4           & 4           & \textbf{6} & 4          \\
3 & \textbf{5} & 4           & 4           & \textbf{5} & 4          \\
4 & \textbf{4} & \textbf{4}  & \textbf{4}  & \textbf{4} & \textbf{4} \\
5 & \textbf{3} & 2           & 2           & \textbf{3} & 2          \\
6 & \textbf{2} & \textbf{2}  & \textbf{2}  & \textbf{2} & \textbf{2} \\
7 & \textbf{2} & \textbf{2}  & \textbf{2}  & \textbf{2} & \textbf{2} \\
8 & \textbf{1} & \textbf{1}  & \textbf{1}  & \textbf{1} & \textbf{1} \\
\bottomrule
\end{tabular}
\end{table}

Two patterns emerge from Table~\ref{tab:gf3_n8}. First, $C_8$ and $Q_8$ produce \emph{identical} distance profiles and strictly outperform the other three groups at dimensions $k = 2, 3$, and $5$: they yield $[8,2,6]_3$ and $[8,3,5]_3$ codes and reach $d = 3$ at $k = 5$, whereas $C_2 \times C_4$, $D_4$, and $C_2^3$ attain only $d = 4$ at $k = 2, 3$ and $d = 2$ at $k = 5$. Second, the three groups $C_2 \times C_4$, $D_4$, and $C_2^3$ are mutually indistinguishable in their distance profiles. The coincidence of the cyclic group $C_8$ and the non-abelian group $Q_8$ shows that the  minimum-distance profile is \emph{not} determined solely by whether the group is abelian, but depends on finer properties of $\mathbb{F}_q[G]$. All five groups coincide at $k = 4$, each realizing the $[8,4,4]_3$ code.

\subsection{Binary Codes of Length $12$: Groups of Order $12$}
\label{subsec:gf2_n12}

The five groups of order $12$ up to isomorphism are the dicyclic group $C_3 \rtimes C_4$ (also written $\mathrm{Dic}_3$), the cyclic group $C_{12}$, the alternating group $A_4$, the dihedral group $D_6$, and the direct product $C_2 \times C_6$. Table~\ref{tab:gf2_n12} reports $d_{\max}(k, G, 2)$ for each group. A dash (---) indicates that no group ring element of that group realizes a code of the corresponding dimension; this is an algebraic constraint of $\mathbb{F}_2[G]$ and not a computational limitation.
\vskip 3pt
Here $\gcd(12,2)=2\neq1$, so $\mathbb{F}_2[G]$ is \emph{not} semisimple; Theorem~\ref{thm:distinct} (which assumes $\gcd(n,q)=1$) does not apply to this table. We therefore present it as a standalone computational comparison.

\begin{table}[ht]
\centering
\caption{Best minimum distances for $[12,k,d]_2$ group codes over $\mathbb F_2$ for all five groups of order $12$. Boldface entries indicate the maximum distance achieved for that dimension; ``---'' denotes dimensions not realizable by any group ring element of the respective group.}
\vskip 7pt
\label{tab:gf2_n12}
\begin{tabular}{c|ccccc}
\toprule
$k$ & $C_3 \rtimes C_4$ & $C_{12}$ & $A_4$ & $D_6$ & $C_2 \times C_6$ \\
\midrule
 1 & \textbf{12} & \textbf{12} & \textbf{12} & \textbf{12} & \textbf{12} \\
 2 &  \textbf{8} &  \textbf{8} &  \textbf{8} &  \textbf{8} &  \textbf{8} \\
 3 &  \textbf{6} &  \textbf{6} &  \textbf{6} &  4          &  4          \\
 4 &  \textbf{6} &  4          &  \textbf{6} &  4          &  \textbf{6} \\
 5 &  \textbf{4} &  \textbf{4} &  \textbf{4} &  \textbf{4} &  \textbf{4} \\
 6 &  \textbf{4} &  \textbf{4} &  \textbf{4} &  \textbf{4} &  \textbf{4} \\
 7 &  \textbf{4} &  \textbf{4} &  ---        &  \textbf{4} &  ---        \\
 8 &  \textbf{3} &  2          &  \textbf{3} &  2          &  \textbf{3} \\
 9 &  \textbf{2} &  \textbf{2} &  \textbf{2} &  \textbf{2} &  \textbf{2} \\
10 &  \textbf{2} &  \textbf{2} &  ---        &  \textbf{2} &  \textbf{2} \\
11 &  \textbf{2} &  \textbf{2} &  ---        &  ---        &  ---        \\
12 &  \textbf{1} &  \textbf{1} &  \textbf{1} &  \textbf{1} &  \textbf{1} \\
\bottomrule
\end{tabular}
\end{table}

\begin{remark} {\em Observations from the binary length-$12$ table
    \begin{enumerate}
        \item The dicyclic group $C_3\rtimes C_4$ and the alternating group $A_4$ each achieve $d = 6$ at \emph{both} $k=3$ and $k=4$, the highest attainable minimum distance at those dimensions. Among the remaining groups, the cyclic group $C_{12}$ reaches $d=6$ only at $k=3$ and $C_2\times C_6$ only at $k=4$, while the dihedral group $D_6$ does not exceed $d=4$ for any $k \geq 3$.
        \item  Only $C_3 \rtimes C_4$ and $C_{12}$ produce codes at \emph{every} dimension $k \in \{1,\ldots,12\}$. Missing dimensions reflect the idempotent structure of the respective group ring.
        \item The groups $C_3\rtimes C_4$, $A_4$, and $C_2\times C_6$ yield $[12,8,3]_2$ codes, whereas $C_{12}$ and $D_6$ achieve only $[12,8,2]_2$.
        \item The dicyclic group $C_3 \rtimes C_4$ gives the most complete and strongest performance: it covers all dimensions and matches or exceeds every other group in minimum distance for each $k$.
    \end{enumerate}
}\end{remark}

\begin{remark}{\em
Our computational search for $d_{\max}$
restricts attention to \emph{principal} left ideals $\mathcal{C}_a$ generated
by a single element $a \in \mathbb{F}_q[G]$. When $\gcd(|G|, q) = 1$, the
group algebra is semisimple and every left ideal is principal, making our
search theoretically exhaustive over all group codes. However, in
non-semisimple cases (such as Section~\ref{subsec:gf2_n12}), non-principal
left ideals may exist. Consequently, the distances reported in the non-semisimple
tables represent a lower bound on the absolute maximum distance achievable by
an arbitrary group code of that dimension.
}\end{remark}
\section{Quantum Codes from Self-Orthogonal Group Codes}\label{sec:qecc}
Recall that a group code over $\mathbb{F}_q$ is a left ideal of the group ring $\mathbb{F}_q[G]$. In this section we characterize the self-orthogonality of group codes under the Euclidean, Hermitian, and symplectic inner products, and construct quantum codes from each.

\subsection{QECCs from Euclidean Self-Orthogonal Group Codes}

\begin{proposition}\label{eu-self}
Let $M$ be a generating matrix for a linear code $C$. Then $C\subseteq C^{\perp_e}$ if and only if $MM^{\mathsf T} = O$, where $O$ is the zero matrix.
\end{proposition}

\begin{proof}
The rows of $M$ span $C$, so $C\subseteq C^{\perp_e}$ holds if and only if every row of $M$ is orthogonal to every codeword. Since the codewords are the $\mathbb{F}_q$-combinations of the rows, this is equivalent to $M r^{\mathsf T} = 0$ for each row $r$ of $M$; collecting these equations over all rows is exactly $M M^{\mathsf T} = O$.
\end{proof}

Using this, we obtain a necessary and sufficient condition for a group code to be Euclidean self-orthogonal.

\begin{corollary}\label{es}
Let $G$ be a finite group of order $n$ and $a\in\mathbb{F}_q[G]$. Suppose
$C=\mathcal{C}(a)$ is the group code with generating matrix $\sigma(a)$.
Then $C \subseteq C^{\perp_e}$ if and only
if $aa^* = 0$ in $\mathbb{F}_q[G]$.
\end{corollary}

\begin{proof}
This can also be proved by following the argument used in the proof of Theorem~\ref{thm:selfdual}(i). For completeness, we provide a direct proof.

Let $M=\sigma(a)$. By Lemma~\ref{lem:involution}, we have
\[
M^{\mathsf T}=\sigma(a)^{\mathsf T}=\sigma(a^*).
\]
Using Proposition~\ref{eu-self} together with the fact that $\sigma$ is a ring homomorphism, we obtain
\[
C\subseteq C^{\perp_e}
    \iff \sigma(a)\sigma(a^*)=O
    \iff \sigma(aa^*)=O
    \iff aa^*=0,
\]
where the final equivalence follows from the injectivity of $\sigma$.
\end{proof}

\begin{corollary}[\cite{DGTT18}]
If $a=a^*$ and $a^2=0$, then $\mathcal{C}(a)$ is Euclidean self-orthogonal.
\end{corollary}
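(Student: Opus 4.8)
The plan is to invoke Proposition \ref{es} directly, since it already supplies the exact criterion characterizing Euclidean self-orthogonality for a $G$-code with generator matrix $\sigma(a)$: namely, $C$ is Euclidean self-orthogonal if and only if $aa^T = 0$. The corollary hands us the two hypotheses $a = a^T$ and $a^2 = 0$, so the whole task reduces to verifying that together these force the product $aa^T$ to vanish.

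First I would substitute the symmetry hypothesis $a^T = a$ into the product, obtaining $aa^T = a \cdot a = a^2$. Then the nilpotency hypothesis $a^2 = 0$ gives $aa^T = 0$ at once. Finally, applying Proposition \ref{es} with the established equality $aa^T = 0$ yields that $C$ is Euclidean self-orthogonal, which completes the argument.

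There is no genuine obstacle here: the statement is an immediate specialization of Proposition \ref{es}, and the only computation involved is the one-line chain $aa^T = a^2 = 0$. The conceptual content of the result lies simply in observing that the symmetry condition $a = a^T$ collapses the general criterion $aa^T = 0$ into the more transparent nilpotency condition $a^2 = 0$, which is frequently easier to check for explicit elements of a concrete group ring $\mathbb{F}_q G$.
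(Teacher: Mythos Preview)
Your proposal is correct and matches the paper's intended approach: the corollary is stated without proof precisely because it is an immediate consequence of Proposition~\ref{es}, obtained via the one-line substitution $aa^T = a\cdot a = a^2 = 0$ that you describe.
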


A QECC with length $n$, dimension $k$, and minimum distance $d$ over $\mathbb{F}_q$ is denoted by $[[n,k,d]]_q$. We use the following result of \cite{AK01}, which constructs quantum codes from self-orthogonal linear codes.

\begin{theorem}[\cite{AK01}]\label{e1}
Let $C$ be an $[n, k, d]_q$ linear code over $\mathbb{F}_q$. If $C\subseteq C^{\perp_e}$, then there exists a QECC with parameters $[[n, n-2k, \ge d_H]]_q$, where $d_H$ is the minimum Hamming weight of $C^{\perp_e} \setminus C$.
\end{theorem}

Combining Theorem~\ref{e1} with Proposition~\ref{es} gives our main construction.

\begin{theorem}\label{T1}
Let $G$ be a finite group of order $n$ and $a\in\mathbb{F}_q[G]$. Suppose $C=\mathcal{C}(a)$ is the group code with parameters $[n,k,d_H]_q$. If $aa^*=0$, then there exists a QECC with parameters $[[n,n-2k,\ge d_H]]_q$.
\end{theorem}

\begin{proof}
If $aa^* = 0$, then $C$ is Euclidean self-orthogonal by Proposition~\ref{es}, and the result follows from Theorem~\ref{e1}.
\end{proof}

\begin{example}{\em
Consider the group ring $\mathbb{F}_q[C_l \times C_m]$ with $q = 2$, $l = 5$, and $m = 3$. Let $C_5 = \langle x\rangle$ and $C_3 = \langle y\rangle$, so that
\[
C_5 \times C_3 = \{e, y, y^2, x, xy, xy^2, x^2, x^2y, x^2y^2, x^3, x^3y, x^3y^2, x^4, x^4y, x^4y^2\}.
\]
Take $a \in \mathbb{F}_2[C_5 \times C_3]$ defined by
\[
a = e + y^2 + x^2 + x^2y^2 + x^3y + x^3y^2 + x^4y + x^4y^2,
\]
so that
\[
a^* = e + y + xy + xy^2 + x^2y + x^2y^2 + x^3 + x^3y.
\]
Let $C$ be the group code generated by $\sigma(a)$. Then $C$ is a $[15, 4, 8]_2$ linear code. A direct calculation gives $aa^* = 0$, so by Proposition~\ref{es} the code $C$ is Euclidean self-orthogonal. Theorem~\ref{T1} then yields a QECC with parameters $[[15, 7, 3]]_2$, which match the best known parameters of length $15$ recorded in \cite{Gra}. (Since $\gcd(5,3)=1$, this case is equivalent to $\mathbb{F}_2[C_{15}]$.)
}\end{example}

\subsection{An Infinite Family of Quantum Codes}\label{sec:family}

\begin{proposition}\label{prop:kron-iso}
Let $H$ and $K$ be finite groups. The $\mathbb{F}_q$-linear map
\[
\Theta:\mathbb{F}_q[H]\otimes_{\mathbb{F}_q}\mathbb{F}_q[K]\longrightarrow\mathbb{F}_q[H\times K],
\qquad h\otimes k\longmapsto (h,k),
\]
is an isomorphism of $\mathbb{F}_q$-algebras. Under this identification, for $a=\sum_{h}\alpha_h h\in\mathbb{F}_q[H]$ and $b=\sum_{k}\beta_k k\in\mathbb{F}_q[K]$ one has
\[
a\otimes b=\sum_{h\in H}\sum_{k\in K}\alpha_h\beta_k\,(h,k),\qquad
(a\otimes b)(a'\otimes b')=aa'\otimes bb',\qquad
(a\otimes b)^*=a^*\otimes b^*.
\]
\end{proposition}

\begin{proof}
The map $\Theta$ sends the basis $\{h\otimes k:h\in H,\ k\in K\}$ bijectively onto the basis $\{(h,k):h\in H,\ k\in K\}$ of $\mathbb{F}_q[H\times K]$, so it is an $\mathbb{F}_q$-vector space isomorphism. On basis elements,
\[
\Theta\bigl((h\otimes k)(h'\otimes k')\bigr)=\Theta\bigl(hh'\otimes kk'\bigr)=(hh',kk')=(h,k)(h',k')=\Theta(h\otimes k)\,\Theta(h'\otimes k'),
\]
and extending bilinearly shows $\Theta$ is multiplicative; hence $\Theta$ is an algebra isomorphism. The displayed formula for $a\otimes b$ is immediate, and
\[
(a\otimes b)(a'\otimes b')=\sum_{h,h',k,k'}\alpha_h\alpha'_{h'}\beta_k\beta'_{k'}(hh',kk')=aa'\otimes bb'.
\]
Finally, since $(h,k)^{-1}=(h^{-1},k^{-1})$,
\[
(a\otimes b)^*=\sum_{h,k}\alpha_h\beta_k\,(h^{-1},k^{-1})=\Bigl(\sum_h\alpha_h h^{-1}\Bigr)\otimes\Bigl(\sum_k\beta_k k^{-1}\Bigr)=a^*\otimes b^*. \qedhere
\]
\end{proof}

\begin{remark}{\em
In Subsection~\ref{subsec:duality} and the discussion of $\mathbb{F}_q[C_l\times C_m]$, we wrote the basis element $(x^i,y^j)$ simply as $x^iy^j$. This is justified by Proposition~\ref{prop:kron-iso}: under $\mathbb{F}_q[C_l]\otimes_{\mathbb{F}_q}\mathbb{F}_q[C_m]\cong\mathbb{F}_q[C_l\times C_m]$ we have $x^i\otimes y^j\mapsto (x^i,y^j)\mapsto x^iy^j$, and we freely identify these throughout.
}\end{remark}

\begin{lemma}\label{lem:kron}
Order the elements of $H\times K$ as $g_{(i,j)}=(h_i,k_j)$. Then $\sigma(a\otimes b)=\sigma(a)\otimes\sigma(b)$, where $\otimes$ on the right is the Kronecker product of matrices. Consequently, $\mathcal{C}(a\otimes b)=\mathcal{C}(a)\otimes\mathcal{C}(b)$ (the Kronecker product of codes) and
\[
\dim\mathcal{C}(a\otimes b)=\operatorname{rank}\sigma(a)\,\operatorname{rank}\sigma(b).
\]
\end{lemma}

\begin{proof}
The $((i,j),(i',j'))$-entry of $\sigma(a\otimes b)$ is the coefficient of $(h_i,k_j)^{-1}(h_{i'},k_{j'})=(h_i^{-1}h_{i'},\,k_j^{-1}k_{j'})$ in $a\otimes b$, namely
\[
\alpha_{h_i^{-1}h_{i'}}\beta_{k_j^{-1}k_{j'}}=\sigma(a)_{i,i'}\,\sigma(b)_{j,j'}=(\sigma(a)\otimes\sigma(b))_{(i,j),(i',j')}.
\]
Hence $\sigma(a\otimes b)=\sigma(a)\otimes\sigma(b)$. The row space of a Kronecker product $A\otimes B$ is the Kronecker product of the row spaces, so $\mathcal{C}(a\otimes b)=\mathcal{C}(a)\otimes\mathcal{C}(b)$, and $\operatorname{rank}(A\otimes B)=\operatorname{rank}(A)\operatorname{rank}(B)$ gives the stated dimension.
\end{proof}

\begin{theorem}\label{thm:product}
Let $\mathcal{C}(a)$ be an $[l,k_1,d_1]_q$ group code in $\mathbb{F}_q[H]$ and $\mathcal{C}(b)$ an $[m,k_2,d_2]_q$ group code in $\mathbb{F}_q[K]$, both  nonzero $($and, for parts $(iii)-(iv)$, proper$)$, and let $C=\mathcal{C}(a\otimes b)$ have length $lm$. Then:
\begin{enumerate}
\item[\rm(i)] $C$ has parameters $[\,lm,\ k_1k_2,\ d_1d_2\,]_q$;
\item[\rm(ii)] $(a\otimes b)(a\otimes b)^{*}=(aa^{*})\otimes(bb^{*})$, so $C\subseteq C^{\perp_e}$ if and only if  $aa^{*}=0$ or $bb^{*}=0$;
\item[\rm(iii)] $d(C^{\perp_e})=\min\bigl(d(\mathcal{C}(a)^{\perp_e}),\,d(\mathcal{C}(b)^{\perp_e})\bigr)$;
\item[\rm(iv)] if $aa^{*}=0$ $($or $bb^{*}=0)$ there is a QECC $\bigl[\!\bigl[\,lm,\ lm-2k_1k_2,\ \ge\min(d(\mathcal{C}(a)^{\perp_e}),d(\mathcal{C}(b)^{\perp_e}))\,\bigr]\!\bigr]_q$, with equality in the distance whenever $d_1,d_2\ge 2$.
\end{enumerate}
\end{theorem}

\begin{proof}
Write $U=\mathcal{C}(a)$ and $V=\mathcal{C}(b)$.

(i) $\dim C=k_1k_2$ by Lemma~\ref{lem:kron}. Writing codewords as $l\times m$ arrays, $U\otimes V=(U\otimes\mathbb{F}_q^m)\cap(\mathbb{F}_q^l\otimes V)=\{M:\text{columns}\in U,\ \text{rows}\in V\}$; a nonzero $M$ has a nonzero row (weight $\ge d_2$), hence at least $d_2$ nonzero columns, each of weight $\ge d_1$, so $\wt(M)\ge d_1d_2$, attained on an outer product of minimum-weight words.

(ii) By Proposition~\ref{prop:kron-iso}, $(a\otimes b)^*=a^*\otimes b^*$, and a Kronecker product of two algebra elements is zero if and only if  one factor is zero; combine with Proposition~\ref{es}.

(iii) We know that $C=U\otimes V=(U\otimes\mathbb{F}_q^m)\cap(\mathbb{F}_q^l\otimes V)$. Since $(A\cap B)^{\perp_e}=A^{\perp_e}+B^{\perp_e}$ for subspaces $A,B$ of a common ambient space, and $(U\otimes\mathbb{F}_q^m)^{\perp_e}=U^{\perp_e}\otimes\mathbb{F}_q^m$, $(\mathbb{F}_q^l\otimes V)^{\perp_e}=\mathbb{F}_q^l\otimes V^{\perp_e}$, we get
\[
C^{\perp_e}=(U^{\perp_e}\otimes\mathbb{F}_q^m)+(\mathbb{F}_q^l\otimes V^{\perp_e}).
\]
For a minimum-weight $u\in U^{\perp_e}$ and $e_1=(1,0,\ldots,0)\in\mathbb{F}_q^m$, the array $u\otimes e_1\in C^{\perp_e}$ has its first column equal to $u$ and all other columns zero, so $\wt(u\otimes e_1)=\wt(u)=d(U^{\perp_e})$; hence $d(C^{\perp_e})\le d(U^{\perp_e})$, and symmetrically $d(C^{\perp_e})\le d(V^{\perp_e})$. Conversely, let $0\neq M\in C^{\perp_e}$ with $w=\wt(M)$. Since $C=U\otimes V$, $\langle M,u\otimes v\rangle=u^{\mathsf T}Mv=0$ for all $u\in U$, $v\in V$, so $Mv\in U^{\perp_e}$ for every $v\in V$. As $Mv$ is supported on the nonzero rows of $M$, $\wt(Mv)\le w$. If $w<d(U^{\perp_e})$ then $Mv=0$ for all $v$, so every row of $M$ lies in $V^{\perp_e}$ and $w\ge d(V^{\perp_e})$. Interchanging $U,V$ gives the symmetric statement, so $w\ge\min(d(U^{\perp_e}),d(V^{\perp_e}))$. Combining the two inequalities yields $d(C^{\perp_e})=\min(d(U^{\perp_e}),d(V^{\perp_e}))$.

(iv) If $aa^*=0$ (or $bb^*=0$), then $C\subseteq C^{\perp_e}$ by (ii), and Theorem~\ref{T1} gives a QECC $[[lm,\,lm-2k_1k_2,\,d_Q]]_q$ with $d_Q=\wt(C^{\perp_e}\setminus C)\ge d(C^{\perp_e})=\min(d(U^{\perp_e}),d(V^{\perp_e}))$ by (iii). If $d_2\ge2$, a minimum-weight $u\in U^{\perp_e}$ gives $u\otimes e_1\in C^{\perp_e}$ with $e_1\notin V$ (every nonzero codeword of $V$ has weight at least 2), so $u\otimes e_1\notin U\otimes V=C$ and $d_Q\le\wt(u\otimes e_1)=d(U^{\perp_e})$. If $d_1\ge2$, symmetrically $d_Q\le d(V^{\perp_e})$. Hence $d_Q=\min(d(U^{\perp_e}),d(V^{\perp_e}))$ whenever $d_1,d_2\ge2$.
\end{proof}

\begin{corollary}\label{cor:famA}
For even $l,m$, the repetition codes $\mathcal{C}(a)=U$ and $\mathcal{C}(b)=V$ over $\mathbb{F}_2$ give a self-orthogonal $[lm,1,lm]_2$ code and a QECC $[[\,lm,\ lm-2,\ 2\,]]_2$.
\end{corollary}

\begin{proof}
For even $l$, the all-ones generator of $U$ satisfies $aa^*=0$ over $\mathbb{F}_2$, so $\mathcal{C}(a\otimes b)$ is self-orthogonal by Theorem~\ref{thm:product}(ii) and has parameters $[lm,1,lm]_2$ by~(i). The dual is the even-weight code with parameters $[lm,lm-1,2]$, so $d(U^{\perp_e})=d(V^{\perp_e})=2$; since $d(U)=l\ge2$ and $d(V)=m\ge2$, Theorem~\ref{thm:product}(iv) gives the QECC $[[lm,\,lm-2,\,2]]_2$.
\end{proof}

\begin{corollary}\label{cor:famB}
Let \(q\) be an odd prime power and take \(H=K=C_{q-1}\), so
\(G=C_{q-1}\times C_{q-1}\). Since \(q=p^s\) for a prime \(p\) with \(p\nmid q-1\),
\(\mathbb{F}_q[C_{q-1}]\) is semisimple (Maschke) and every cyclic code of length
\(q-1\) is a principal ideal, hence a group code \(C(a)\). Moreover, since
\(q\equiv 1\pmod{q-1}\), every \(q\)-cyclotomic coset mod \(q-1\) is a singleton,
so any subset \(Z\subseteq\mathbb{Z}_{q-1}\) is a valid defining set for a cyclic
code over \(\mathbb{F}_q\) (in particular the consecutive set below defines an
MDS Reed--Solomon code). Fix a primitive element \(\alpha\in\mathbb{F}_q^\times\) generating \(C_{q-1}\), and identify a cyclic code
with the polynomial \(c(x)=\sum c_i x^i\); its defining set is
\(Z=\{i\in\mathbb{Z}_{q-1}: c(\alpha^i)=0\}\). Let
\(C(a)=C(b)\) be the Reed--Solomon code of length \(q-1\) over
\(\mathbb{F}_q\) with defining set
\[
Z=\left\{0,1,\ldots,\frac{q-1}{2}\right\},
\] which is the MDS self-orthogonal code $[\,q-1,\tfrac{q-3}{2},\tfrac{q+3}{2}\,]_q$ with $d(\mathcal{C}(a)^{\perp_e})=\tfrac{q-1}{2}$. Then Theorem~\ref{thm:product} yields, for all $q\ge 5$,
\[
\Bigl[\!\Bigl[\,(q-1)^2,\ \tfrac{q^{2}+2q-7}{2},\ \tfrac{q-1}{2}\,\Bigr]\!\Bigr]_q ,
\]
an infinite family of QECCs of strictly increasing minimum distance.
\end{corollary}

\begin{proof}
We first show that $\mathcal{C}(a)$ is Euclidean self-orthogonal and then determine its parameters in order to apply Theorem~\ref{thm:product}. Let $n=q-1$. We first show that
\[
Z\cup(-Z)=\mathbb Z_n.
\]
For $j\in Z$ with $1\le j\le\tfrac{q-1}{2}$ we have $-j\equiv n-j=(q-1)-j\pmod n$, and as $j$ runs over $1,\dots,\tfrac{q-1}{2}$ the value $(q-1)-j$ runs over the integers in $[\tfrac{q-1}{2},\,q-2]$; together with $0$ (from $j=0$),
\[
-Z=\{0\}\cup\Bigl\{\tfrac{q-1}{2},\,\tfrac{q-1}{2}+1,\dots,q-2\Bigr\}.
\]
Since $Z=[0,\tfrac{q-1}{2}]\cap\mathbb{Z}$ and $-Z=\{0\}\cup\bigl([\tfrac{q-1}{2},q-2]\cap\mathbb{Z}\bigr)$ share the endpoint $\tfrac{q-1}{2}$ with no gap,
\[
Z\cup(-Z)=\{0,1,\dots,q-2\}=\mathbb{Z}_{q-1}.
\]
For a cyclic code with defining set $T$, the Euclidean dual has defining set $\mathbb{Z}_n\setminus(-T)$, and $C\subseteq C^{\perp_e}$ if and only if  $T\supseteq\mathbb{Z}_n\setminus(-T)$, i.e.\ if and only if  $\mathbb{Z}_n=T\cup(-T)$. Applying this with $T=Z$ gives $\mathcal{C}(a)\subseteq\mathcal{C}(a)^{\perp_e}$, so $aa^{*}=0$ by Proposition~\ref{es}.
\vskip 3pt

Since $|Z|=\tfrac{q+1}{2}$, the code $\mathcal{C}(a)=\mathcal{C}(b)$ is the MDS $[\,q-1,\tfrac{q-3}{2},\tfrac{q+3}{2}\,]_q$ code, so $k_1=k_2=\tfrac{q-3}{2}$ and $d_1=d_2=\tfrac{q+3}{2}$. Its dual is also MDS with $d(\mathcal{C}(a)^{\perp_e})=(q-1)-\tfrac{q+1}{2}+1=\tfrac{q-1}{2}$. By Theorem~\ref{thm:product}(iii), $d(C^{\perp_e})=\tfrac{q-1}{2}$, and
\[
lm-2k_1k_2=(q-1)^2-2\Bigl(\tfrac{q-3}{2}\Bigr)^2=\frac{q^2+2q-7}{2}.
\]
The constituent code is nonzero precisely when
\(k_1=\frac{q-3}{2}\ge 1\), i.e., \(q\ge 5\). For \(q\ge 5\) we further
have \(d_1=d_2=\frac{q+3}{2}\ge 2\), so
Theorem~\ref{thm:product}(iv) applies with equality and gives the
quantum code
\[
\Bigl[\!\Bigl[\,(q-1)^2,\ \tfrac{q^2+2q-7}{2},\ \tfrac{q-1}{2}\,\Bigr]\!\Bigr]_q.
\]
As $q\to\infty$, both the length $(q-1)^2$ and the distance $\tfrac{q-1}{2}$ grow without bound, giving an infinite family of QECCs of strictly increasing minimum distance.
\end{proof}

\begin{corollary}[Dihedral family]\label{cor:dihedral}
Let $C_l=\langle x\mid x^l=1\rangle$, $D_m=\langle r,s\mid r^m=s^2=1,\,srs=r^{-1}\rangle$,
and $G=C_l\times D_m$ of order $2lm$. Let $u\in\mathbb{F}_q[C_l]$ satisfy
$uu^*=0$ such that $\mathcal{C}(u)$ is a proper, nonzero $[l,k_1,d_1]_q$ code.
Let $v\in\mathbb{F}_q[D_m]$ be any non-zero, non-unit element with
$\mathcal{C}(v)=[2m,k_2,d_2]_q$. Then $\mathcal{C}(u\otimes v)$ is a
self-orthogonal $[2lm,k_1k_2,d_1d_2]_q$ code, and by Theorem~\ref{thm:product}
there is a QECC
\[
\bigl[\!\bigl[\,2lm,\ 2lm-2k_1k_2,\ \ge \min(d(\mathcal{C}(u)^{\perp_e}),
d(\mathcal{C}(v)^{\perp_e}))\,\bigr]\!\bigr]_q,
\]
with equality holding in the minimum distance whenever $d_1, d_2 \ge 2$.
\end{corollary}

\begin{example}{\em
Take $q=2$, $l=2$, $m=3$ (so $C_2\times D_3\cong D_6$). With $u=1+x$ we have $uu^*=0$, $\mathcal{C}(u)=[2,1,2]_2$, and $\sigma(u)=\begin{psmallmatrix}1&1\\1&1\end{psmallmatrix}$. With $v=1+r+r^2\in\mathbb{F}_2[D_3]$ (Form~1) we get $A=\Circ(1,1,1)=J_3$, $B=0$, so $\sigma(v)=\begin{psmallmatrix}J_3&0\\0&J_3\end{psmallmatrix}$ and $\mathcal{C}(v)=[6,2,3]_2$. Then $u\otimes v=\sum_{h\in\{1,x\}}\sum_{k\in\{1,r,r^2\}}(h,k)$, $\sigma(u\otimes v)=\sigma(u)\otimes\sigma(v)$, and $\mathcal{C}(u\otimes v)=[12,2,6]_2$ is self-orthogonal, giving the QECC $[[12,8,2]]_2$.
}\end{example}

\begin{lemma}[Distance ceiling]\label{lem:ceiling}
Let $C\subseteq\mathbb{F}_q^{N}$ be a Euclidean self-orthogonal linear code.
Then
\[
  d(C^{\perp_e})\le \dim C+1\le\Big\lfloor \tfrac{N}{2}\Big\rfloor+1 .
\]
If, in addition, $q$ is an odd prime power and $C$ is cyclic of length
$N=q-1$, then $\dim C\le\frac{q-3}{2}$, and hence
$d(C^{\perp_e})\le\frac{q-1}{2}$.
\end{lemma}

\begin{proof}
To establish the general bound, we note that for
 any linear code the Euclidean dual has complementary dimension,
$\dim C^{\perp_e}=N-\dim C$. Euclidean self-orthogonality
$C\subseteq C^{\perp_e}$ gives $\dim C\le\dim C^{\perp_e}$, whence
$2\dim C\le N$ and, as $\dim C\in\mathbb{Z}$, $\dim C\le\lfloor N/2\rfloor$.
Applying the Singleton bound to $C^{\perp_e}$,
\[
  d(C^{\perp_e})\le N-\dim C^{\perp_e}+1
              = N-(N-\dim C)+1
              = \dim C+1
              \le \Big\lfloor \tfrac{N}{2}\Big\rfloor+1 .
\]

In the cyclic case, 
let $q$ be odd and let $C$ be cyclic of length $N=q-1$. Since
$\gcd(N,q)=\gcd(q-1,q)=1$, the polynomial $x^{N}-1$ is separable over
$\mathbb{F}_q$, so $C$ is determined by its defining set
$T:=Z(C)\subseteq\mathbb{Z}_N$, and $\dim C=N-|T|$. The Euclidean dual is the
cyclic code with defining set $Z(C^{\perp_e})=\mathbb{Z}_N\setminus(-T)$; hence
\[
  C\subseteq C^{\perp_e}
  \iff Z(C^{\perp_e})\subseteq Z(C)
  \iff T\cup(-T)=\mathbb{Z}_N .
\]
Because $q$ is odd, $N=q-1$ is even, and the equation $x\equiv -x\pmod N$ has
exactly the two solutions $x=0$ and $x=N/2=(q-1)/2$. As $-0=0$ and
$-\tfrac{N}{2}=\tfrac{N}{2}$, covering these two points by $T\cup(-T)$ forces
$0,\tfrac{q-1}{2}\in T$. The remaining $N-2$ elements split into
$(N-2)/2$ disjoint pairs $\{x,-x\}$ with $x\ne -x$, and $T$ must contain at
least one element of each pair. Therefore
\[
  |T|\ \ge\ 2+\frac{N-2}{2}\ =\ \frac{N+2}{2}\ =\ \frac{q+1}{2},
\]
so that
\[
  \dim C = N-|T| \le (q-1)-\frac{q+1}{2}=\frac{q-3}{2}.
\]
Combining with the general bound $d(C^{\perp_e})\le\dim C+1$ gives
$d(C^{\perp_e})\le\frac{q-3}{2}+1=\frac{q-1}{2}$.
\end{proof}

\subsection{QECCs from Hermitian Self-Orthogonal Group Codes}

Throughout this subsection $q=p^2$ for a prime power $p$. For $\mathbf{a} = (a_1,\ldots,a_{n})$ and $\mathbf{b} = (b_1,\ldots,b_{n})$ in $\mathbb{F}_{q}^{n}$, the Hermitian inner product is
\[
\langle \mathbf{a}, \mathbf{b}\rangle_h=a_1b^p_{1}+ \cdots +a_{n}b^p_{n},
\]
and the Hermitian dual of $C$ is
\[
C^{\perp_h}= \{\mathbf{c} \in \mathbb{F}_{p^2}^{n} : \langle\mathbf{c},\mathbf{d}\rangle_h= 0 \text{ for all } \mathbf{d} \in C\}.
\]
We say $C$ is Hermitian self-orthogonal if $C \subseteq C^{\perp_h}$.

\vskip 5pt
For $a = \sum_{i=1}^n \alpha_i g_i \in \mathbb{F}_q[G]$ we set $a^p = \sum_{i=1}^n \alpha_i^p g_i \in \mathbb{F}_q[G]$, and for a matrix $M = (m_{ij})$ we set $\overline{M} = (m_{ij}^p)$.

\begin{proposition}\label{he-self}
Let $M$ be a generating matrix of a linear code $C$. Then $C \subseteq C^{\perp_h}$ if and only if $M \overline{M}^{\mathsf T} = O$, where $O$ is the zero matrix.
\end{proposition}

\begin{proof}
Write each codeword as $x = uM$ for $u \in \mathbb{F}_{q}^k$. For $x = uM$ and $y = vM$,
\[
\langle x, y \rangle_h = (uM)\,\overline{(vM)}^{\mathsf T} = u\,M \overline{M}^{\mathsf T}\,\overline{v}^{\mathsf T}.
\]
This vanishes for all $u, v$ if and only if $M \overline{M}^{\mathsf T} = O$.
\end{proof}

\begin{proposition} \label{H0}
Let $G$ be a finite group of order $n$ and $a\in\mathbb{F}_q[G]$. Suppose $C=\mathcal{C}(a)$ is the group code with generating matrix $\sigma(a)$. Then $C \subseteq C^{\perp_h}$ if and only if $a(a^p)^* = 0$ in $\mathbb{F}_q[G]$.
\end{proposition}
\begin{proof}
With $M = \sigma(a)$ we have $\overline{M} = \sigma(a^p)$, so $\overline{M}^{\mathsf T} = \sigma(a^p)^{\mathsf T} = \sigma\bigl((a^p)^*\bigr)$. By Proposition~\ref{he-self} and the homomorphism property of $\sigma$,
\[
M\overline{M}^{\mathsf T} = \sigma(a)\,\sigma\bigl((a^p)^*\bigr) = \sigma\bigl(a(a^p)^*\bigr) = O
\iff a(a^p)^* = 0,
\]
the last step by injectivity of $\sigma$.
\end{proof}

We recall the following result of \cite{AK01}, which constructs a quantum code from a Hermitian self-orthogonal code.

\begin{theorem}[\cite{AK01}]\label{H1}
Let $C$ be an $[n, k, d]_{p^2}$ linear code over $\mathbb{F}_{p^2}$. If $C\subseteq C^{\perp_h}$, then there exists a QECC with parameters $[[n, n-2k, \ge d_H]]_p$, where $d_H$ is the minimum Hamming weight of $C^{\perp_h} \setminus C$.
\end{theorem}

\begin{theorem} \label{H2}
Let $G$ be a finite group of order $n$, $q=p^2$, and $a\in\mathbb{F}_q[G]$. Suppose $C=\mathcal{C}(a)$ is a group code with parameters $[n,k,d]_{p^2}$. If $a(a^p)^* = 0$, then there exists a QECC with parameters $[[n, n - 2k, \ge d_H]]_p$.
\end{theorem}
\begin{proof}
This follows from Proposition~\ref{H0} and Theorem~\ref{H1}.
\end{proof}

\begin{example}{\em
Consider the group ring $\mathbb{F}_q[D_m]$ with $q = 3^2$ and $m = 5$. List the dihedral group $D_5$ as
\[
D_5=\{e, b, b^2, b^3, b^4, a, ab, ab^2, ab^3, ab^4\}.
\]
Take $c\in \mathbb{F}_9[D_5]$ given by
\[
c= e + w^5b + 2b^2 + w^6b^3 + w^3b^4 + w^2a + w^6ab + w^7ab^2 + w^7ab^3 + w^7ab^4,
\]
where $w$ is a root of $x^2 + 2x + 2$ in $\mathbb{F}_9$. Then
\[
c^3= e + w^7b + 2b^2 + w^2b^3 + wb^4 + w^6a + w^2ab + w^5ab^2 + w^5ab^3 + w^5ab^4,
\]
\[
(c^3)^*= e + wb + w^2b^2 + 2b^3 + w^7b^4 + w^6a + w^2ab + w^5ab^2 + w^5ab^3 + w^5ab^4.
\]
Let $C$ be the group code generated by $\sigma(c)$. Then $C$ is a $[10, 4, 6]_9$ linear code. A direct calculation gives $c(c^3)^* = 0$, so by Proposition~\ref{H0} the code $C$ is Hermitian self-orthogonal. Theorem~\ref{H2} then yields a QECC with parameters $[[10, 2, 4]]_3$, which match the best known parameters of length $10$ \cite{Gra}.
}\end{example}

\begin{corollary}[Hermitian quantum-MDS family, cyclic case]\label{cor:herm}
Let $q$ be a prime power and $1\le d\le q-1$. From a Hermitian self-orthogonal MDS cyclic code over  $\mathbb{F}_{q^2}$ of length $q^2-1$ and
dimension $d-1$, we obtain the $q$-ary quantum MDS code
\[
  \bigl[\!\bigl[\,q^2-1,\ q^2-2d+1,\ d\,\bigr]\!\bigr]_q .
\]
\end{corollary}

\begin{proof}
Since \(\gcd(q^2-1,q)=1\), \(\mathbb{F}_{q^2}[C_{q^2-1}]\) is semi-simple and every cyclic code of length \(q^2-1\) over \(\mathbb{F}_{q^2}\) is a principal ideal, hence a group code \(C(a)\) for a suitable ambient cyclic group. Let \(\alpha\) be a primitive element of \(\mathbb{F}_{q^2}\) and \(n=q^2-1\); a cyclic code \(C\) of length \(n\) is determined by its defining set
$T\subseteq\mathbb{Z}_n$, with $S:=\mathbb{Z}_n\setminus T$ and
$\dim C=|S|$. Over $\mathbb{F}_{q^2}$ one has
$Z(C^{\perp_H})=\{-q\,z:z\in S\}$, so
\[
  C\subseteq C^{\perp_H}
  \iff (-q\,S)\cap S=\varnothing .
\]
Take $S=\{1,2,\dots,d-1\}$, a run of $d-1\le q-2$ consecutive integers. Then
$T$ is a run of $n-(d-1)$ consecutive integers, so $C$ is a Reed--Solomon code
and is MDS of dimension $d-1$. Moreover, for $1\le j\le d-1$,
\[
  -q\,j \bmod n \;=\; q(q-j)-1 \;\ge\; 2q-1 \;>\; d-1,
\]
so $(-q\,S)\cap S=\varnothing$ and $C$ is Hermitian self-orthogonal. Its dual
is the MDS code $C^{\perp_H}$ with $d(C^{\perp_H})=\dim C+1=d$. Theorem~\ref{H2}
then gives a quantum code of length $n=q^2-1$, dimension
$n-2(d-1)=q^2-2d+1$, and distance $d$, which meets the quantum Singleton bound
and is therefore MDS.
\end{proof}

\begin{remark}{\em
Throughout this section we take $q=p^{2}$ for a prime $p$, since the Hermitian inner product $\langle \mathbf{a},\mathbf{b}\rangle_h$
is defined via the Frobenius automorphism $x\mapsto x^{p}$, which plays the
role of complex conjugation and is available only when the field size is a
perfect square~\cite{AK01,CRSS98}. Under this restriction, Hermitian
self-orthogonal codes over $\mathbb{F}_{p^{2}}$ yield quantum stabilizer codes
over $\mathbb{F}_{p}$; when $q$ is not a perfect square, one instead relies on
Euclidean or symplectic self-orthogonality.
}\end{remark}

\subsection{QECCs from Symplectic Self-Orthogonal Group Codes}

We work with the symplectic inner product on $\mathbb{F}_q^{2n}$. Writing $u=(\alpha_1,\dots,\alpha_n\mid \alpha_{n+1},\dots,\alpha_{2n})$ and $v=(\beta_1,\dots,\beta_n\mid \beta_{n+1},\dots,\beta_{2n})$,
\[
\langle u, v \rangle_s = \sum_{i=1}^{n}\bigl(\alpha_i\beta_{n+i}-\alpha_{n+i}\beta_i\bigr)
= u\,\Omega\, v^{\mathsf T},
\qquad
\Omega=\begin{pmatrix} O_n & I_n \\ -I_n & O_n \end{pmatrix},
\]
where $I_n$ and $O_n$ are the $n\times n$ identity and zero matrices. The symplectic weight of $u$ counts the coordinate pairs that are not jointly zero,
\[
\wt_s(u)=\bigl|\{\,i : (\alpha_i,\alpha_{n+i})\neq(0,0),\ 1\le i\le n\,\}\bigr|,
\]
and the symplectic distance $d_s(C)$ is the smallest symplectic weight among nonzero codewords. The symplectic dual is
\[
C^{\perp_s}=\{\,u\in\mathbb{F}_q^{2n} : \langle u, v\rangle_s=0 \text{ for all } v\in C\,\},
\]
and $C$ is symplectic self-orthogonal when $C\subseteq C^{\perp_s}$.

\begin{remark}\label{rem:sympandgroups}{\em
The symplectic form lives on $\mathbb{F}_q^{2n}$, and there are two natural ways to realize a $2n$-length symplectic code from group rings: take a single $a\in\mathbb{F}_q[G]$ with $|G|=2n$ and use the code generated by $\sigma(a)$; or take two elements $a,b\in\mathbb{F}_q[G]$ with $|G|=n$ and use the horizontally joined matrix $(\sigma(a)\mid\sigma(b))$. We use both below.
}\end{remark}

\begin{remark}{\em
Unlike the Euclidean inner product, which is insensitive to the ordering of the coordinates, the symplectic inner product depends on the partition of the $2n$ coordinates into the two halves of length $n$.
}\end{remark}

We describe two methods for testing symplectic self-orthogonality, corresponding to the two realizations of Remark~\ref{rem:sympandgroups}. In the first (Theorems~\ref{S1} and~\ref{S2}), a single group ring element of a group of order $2n$ generates the code. In the second (Theorems~\ref{S3} and~\ref{S5}), two group codes of length $n$ are joined, so a group ring of order $n$ suffices. We first record the existence result and then the matrix criterion underlying both methods.

\begin{theorem}[\cite{AKS07,AK01}]\label{S0}
Let $C$ be a $[2n,k,d]_q$ linear code over $\mathbb{F}_q$. If $C\subseteq C^{\perp_s}$, then there exists a QECC with parameters $[[n,\,n-k,\,\ge d_S]]_q$, where $d_S$ is the minimum symplectic weight of $C^{\perp_s}\setminus C$.
\end{theorem}

\begin{theorem}[\cite{XD22}]\label{S00}
Let $C$ be a linear code of length $2n$ over $\mathbb{F}_q$ with generating matrix $M$. Then $C\subseteq C^{\perp_s}$ if and only if $M\Omega M^{\mathsf T}=O$.
\end{theorem}

\begin{proof}
The containment $C\subseteq C^{\perp_s}$ holds exactly when every codeword is symplectically orthogonal to every codeword; since the rows of $M$ span $C$, this reduces to orthogonality among the rows, i.e.\ $M\,\Omega\, M^{\mathsf T}=O$.
\end{proof}

\medskip
\noindent\textbf{Symplectic Version 1.}\quad
Specializing $M$ to the generating matrix of a group code gives the following criterion.

\begin{theorem}\label{S1}
Let $G$ be a finite group of order $2n$ and $a\in\mathbb{F}_q[G]$, and let $C=\mathcal{C}(a)$ be the $2n$-length group code with generating matrix $\sigma(a)$. Then $C\subseteq C^{\perp_s}$ if and only if $\sigma(a)\,\Omega\,\sigma(a)^{\mathsf T}=O$.
\end{theorem}

\begin{proof}
Apply Theorem~\ref{S00} with $M=\sigma(a)$.
\end{proof}

\begin{theorem}\label{S2}
Let $G$ be a finite group of order $2n$ and $a\in\mathbb{F}_q[G]$, and suppose $C=\mathcal{C}(a)$ is the $[2n,k,d]_q$ group code with generating matrix $M=\sigma(a)$. If $M\Omega M^{\mathsf T}=O$, then there exists a QECC with parameters $[[n,\,n-k,\,\ge d_S]]_q$, where $d_S$ is the minimum symplectic weight of $C^{\perp_s}\setminus C$.
\end{theorem}

\begin{proof}
Immediate from Theorem~\ref{S1} and Theorem~\ref{S0}.
\end{proof}

\begin{example}{\em
Take the group ring $\mathbb{F}_q[D_m]$ with $q=3$ and $m=11$, and list $D_{11}$ as
\[
D_{11}=\{\,e,b,b^{2},\dots,b^{10},\,a,ab,ab^{2},\dots,ab^{10}\,\}.
\]
Let $c\in\mathbb{F}_3[D_{11}]$ be
\begin{align*}
c &= b^{2}+2b^{4}+b^{5}+2b^{6}+2b^{7}+2b^{8}+2b^{9}+2b^{10}\\
  &\quad +2a+2ab+2ab^{2}+2ab^{3}+2ab^{4}+2ab^{5}+2ab^{6}+2ab^{7}+2ab^{8}+2ab^{9}+2ab^{10},
\end{align*}
whose image under the involution $g\mapsto g^{-1}$ is
\begin{align*}
c^{*} &= 2b+2b^{2}+2b^{3}+2b^{4}+2b^{5}+b^{6}+2b^{7}+b^{9}\\
      &\quad +2a+2ab+2ab^{2}+2ab^{3}+2ab^{4}+2ab^{5}+2ab^{6}+2ab^{7}+2ab^{8}+2ab^{9}+2ab^{10}.
\end{align*}
Let $C$ be the group code generated by $\sigma(c)$. Then $C$ is a $[22,11,6]_3$ linear code, and one checks that $\sigma(c)\,\Omega\,\sigma(c)^{\mathsf T}=O$, so by Theorem~\ref{S1} the code $C$ is symplectic self-orthogonal. Theorem~\ref{S2} then yields a QECC with parameters $[[11,0,5]]_3$, which match the best known parameters at length $11$ \cite{Gra}.
}\end{example}

\medskip
\noindent\textbf{Symplectic Version 2.}\quad
We now realize the $2n$-length code as a horizontal join of two $n$-length group codes. Let $G$ have order $n$, let $a,b\in\mathbb{F}_q[G]$, and set $\mathcal{G}_1=\sigma(a)$, $\mathcal{G}_2=\sigma(b)$. The matrix $\mathcal{G}=(\mathcal{G}_1\mid\mathcal{G}_2)$ generates a $2n$-length linear code, and now $G$ need only have order $n$. Recall $\sigma(a^{*})=\sigma(a)^{\mathsf T}$.

\begin{theorem}\label{S3}
Let $G$ be a finite group of order $n$ and $a,b\in\mathbb{F}_q[G]$, and let $C=C(a,b)$ be the $[2n,k,d]_q$ linear code with generating matrix $\mathcal{G}=(\mathcal{G}_1\mid\mathcal{G}_2)$, where $\mathcal{G}_1=\sigma(a)$ and $\mathcal{G}_2=\sigma(b)$. Then $C\subseteq C^{\perp_s}$ if and only if $ab^{*}=ba^{*}$.
\end{theorem}

\begin{proof}
By Theorem~\ref{S00}, $C\subseteq C^{\perp_s}$ if and only if  $\mathcal{G}\,\Omega\,\mathcal{G}^{\mathsf T}=O$. Computing,
\[
\mathcal{G}\,\Omega\,\mathcal{G}^{\mathsf T}
=\begin{pmatrix}\mathcal{G}_1 & \mathcal{G}_2\end{pmatrix}
 \begin{pmatrix}O_n & I_n\\ -I_n & O_n\end{pmatrix}
 \begin{pmatrix}\mathcal{G}_1^{\mathsf T}\\ \mathcal{G}_2^{\mathsf T}\end{pmatrix}
=\mathcal{G}_1\mathcal{G}_2^{\mathsf T}-\mathcal{G}_2\mathcal{G}_1^{\mathsf T}
=\sigma\!\bigl(ab^{*}-ba^{*}\bigr),
\]
using $\sigma(b)^{\mathsf T}=\sigma(b^*)$ and $\sigma(a)^{\mathsf T}=\sigma(a^*)$. Since $\sigma$ is injective, this is $O$ if and only if  $ab^{*}-ba^{*}=0$.
\end{proof}

\begin{theorem}\label{S5}
With $G$, $a$, $b$, and $C=C(a,b)=[2n,k,d]_q$ as in Theorem~\ref{S3}: if $ab^{*}-ba^{*}=0$, then there exists a QECC with parameters $[[n,\,n-k,\,\ge d_S]]_q$, where $d_S$ is the minimum symplectic weight of $C^{\perp_s}\setminus C$.
\end{theorem}

\begin{proof}
This follows from Theorem~\ref{S3} and Theorem~\ref{S0}.
\end{proof}

\begin{example}{\em
Let $G=\langle g\mid g^{N}=1\rangle$ be the cyclic group of order $N$, and consider the group ring $\mathbb{F}_3 G$. For $a,b\in\mathbb{F}_3 G$ we write $a^{*}$ for the image of $a$ under $g\mapsto g^{-1}$, so $\sigma(a^{*})=\sigma(a)^{\mathsf T}$, and we form the $2N$-length code $C=C(a,b)$ with generating matrix $\mathcal{G}=(\sigma(a)\mid\sigma(b))$. By Theorem~\ref{S3}, $C$ is symplectic self-orthogonal exactly when $ab^{*}-ba^{*}=0$, and then Theorem~\ref{S5} produces a QECC with parameters $[[N,\,N-k,\,\ge d_S]]_3$. The three choices below each improve upon the best known parameters of their length recorded in~\cite{Gra}.

\medskip
\noindent\textbf{(i) A $[[30,3,10]]_3$ code.}\quad Let $N=30$ and
\begin{align*}
a &= 1 + g^2 + g^4,\\
b &= 2 + g + 2g^3 + 2g^6 + 2g^7 + g^9 + g^{10}
      + 2g^{13} + g^{14} + g^{16} + 2g^{17} + 2g^{19} + g^{20} + g^{22}.
\end{align*}
Again $ab^{*}-ba^{*}=0$, so $C=C(a,b)$ is symplectic self-orthogonal. The code $C$ has parameters $[60,27,13]_3$, and Theorem~\ref{S5} gives the quantum code $[[30,3,10]]_3$.

\medskip
\noindent\textbf{(i1) A $[[35,13,8]]_3$ code.}\quad Let $N=35$ and
\begin{align*}
a &= 2 + 2g + g^{3} + 2g^{4} + g^{5} + g^{6} + 2g^{7} + 2g^{8} + 2g^{9}
     + 2g^{10} + g^{11} + 2g^{12} + g^{13},\\
b &= e + g + 2g^{2} + g^{3} + g^{4} + 2g^{5} + 2g^{6} + 2g^{7} + g^{8}
     + 2g^{11} + 2g^{12} + g^{13} + g^{14} + g^{15} + 2g^{16} + g^{17}\\
  &\quad + 2g^{18} + 2g^{19} + g^{20} + g^{21} + 2g^{24} + g^{25} + g^{26}
     + 2g^{27} + 2g^{28} + g^{29} + g^{30} + g^{32} + 2g^{34}.
\end{align*}
Again $ab^{*}-ba^{*}=0$, so $C=C(a,b)$ is symplectic self-orthogonal. The code $C$ has parameters $[70,22,23]_3$, and Theorem~\ref{S5} gives the quantum code $[[35,13,8]]_3$.

\medskip
\noindent\textbf{(iii) A $[[39,16,8]]_3$ code.}\quad Let $N=39$ and
\begin{align*}
a &= e + g + 2g^{2} + 2g^{5} + g^{7} + g^{8} + 2g^{9} + 2g^{10} + 2g^{11}
     + 2g^{12} + g^{13} + g^{14} + 2g^{15} + g^{16},\\
b &= 2g + 2g^{6} + g^{7} + g^{9} + g^{11} + 2g^{12} + g^{13} + 2g^{14}
     + g^{15} + 2g^{16} + 2g^{17} + 2g^{18} + 2g^{20} + g^{21} + 2g^{22}\\
  &\quad + g^{23} + 2g^{27} + 2g^{28} + 2g^{29} + g^{30} + g^{31} + g^{32}
     + g^{33} + 2g^{34} + 2g^{38}.
\end{align*}
A direct computation gives $ab^{*}-ba^{*}=0$, so $C=C(a,b)$ is symplectic self-orthogonal. Here $C$ is a $[78,23,25]_3$ linear code, and Theorem~\ref{S5} yields the quantum code $[[39,16,8]]_3$.

\medskip
\noindent\textbf{(iv) A $[[36,13,8]]_3$ code.} \\
By \cite[Theorem 6]{CRSS98}, if  a quantum code with parameters   $ [[n,k,d]]_p$ exists then a quantum code with parameters $ [[n+1,k,d]]_p$ also exists, when $k > 0$. Therefore, from the above constructed quantum code parameters  $ [[ 35, 13, 8 ]]_3$, we get a quantum code with parameters  $ [[ 36, 13, 8 ]]_3$ which is also new and breaks the current record which is $ [[ 36, 13, 7 ]]_3$; see \cite{Gra}.

\medskip
The codes $[[30,3,10]]_3$, $[[35,13,8]]_3$, $[[ 36, 13, 8 ]]_3$ and $[[39,16,8]]_3$ each exceed the best known parameters of their respective lengths in~\cite{Gra}, showing that the joined construction of Theorem~\ref{S5}, applied over cyclic groups, is an effective source of new quantum codes.
}\end{example}

\section{Conclusion}\label{sec:conclusion}
In this paper we developed group rings as an algebraic framework for constructing both linear and quantum error-correcting codes. Generating the code matrices through Hurley's homomorphism, we treated cyclic, dihedral, direct-product, and semidirect-product groups within a single matrix description, and we extended the analysis to non-abelian structures such as dihedral and quaternion groups. A central observation is that non-isomorphic groups of the same order can generate inequivalent codes: when the attainable dimension sets differ this is forced (Theorem~\ref{thm:distinct}), and even when they coincide the distance profiles can differ, as the computational comparisons of Section~\ref{sec:comparison} show. We used this freedom to obtain larger minimum distances than those available from cyclic codes of the same length.

We derived explicit necessary and sufficient ring-theoretic conditions for group codes to be self-orthogonal under the Euclidean, Hermitian, and symplectic inner products. Combining these conditions with the CSS framework, we constructed quantum codes, and our computational search produced several quantum codes that match or improve upon previously best known parameters, such as the ternary $[[30,3,10]]_3$, $[[35,13,8]]_3$ and $[[39,16,8]]_3$ codes. We also gave a Kronecker-product construction (Theorem~\ref{thm:product}) yielding an infinite family of self-orthogonal group codes and the associated QECCs. Since each fixed-length code admits several generating matrices and we used only a subset of them, a broader search is likely to reveal codes with further improved parameters. Other natural directions include searching over larger non-abelian group libraries, identifying asymptotic families of self-orthogonal group codes, and exploiting the block-circulant structure for efficient quantum decoding.

\section*{Acknowledgment}
Part of this work was carried out while T. Bag was with Inria, ENS de Lyon, and supported by the European Research Council (ERC Grant AlgoQIP, Agreement No.\ 851716) and by the Agence Nationale de la Recherche (ANR) under the France 2030 program, grant ANR-22-PETQ-0006. D. Panario was funded by the Natural Sciences and Engineering Research Council of Canada (NSERC), reference number RPGIN-2024-05341.  The authors thank Prof.\ Markus Grassl for his valuable suggestions, which helped improve this paper, and the reviewers and editor for their constructive feedback.

\end{document}